\documentclass[acmtog,nonacm,balance=false]{acmart}
\usepackage{amsmath}
\usepackage{algorithm}
\usepackage{algpseudocode}
\usepackage{multirow}
\usepackage{placeins}
\setcopyright{none}
\newcommand{\wopfigure}[2]{\IfFileExists{#1}{\includegraphics[width=\linewidth,height=#2,keepaspectratio]{#1}}{\fbox{\parbox[c][#2][c]{0.94\linewidth}{\centering\small Figure asset not included\\\texttt{\detokenize{#1}}}}}}
\begin{document}
\title{Walk in Envelopes: Efficient Branching Walk-on-Spheres via Certified Solution Bounds}
\author{Yunfeng Liao}
\author{Zhenhao Qiu}
\author{Xinyang Chen}
\author{Xiucheng Li}
\affiliation{\institution{Harbin Institute of Technology (Shenzhen)}\city{Shenzhen}\state{Guangdong}\country{China}}
\renewcommand{\shortauthors}{Liao and Li}
\begin{abstract}
Branching Walk-on-Spheres (WoS) solves semilinear PDEs without meshes, but its computational cost grows sharply near critical branching rates. We introduce \emph{Walk in Envelopes} (WiE), which uses certified solution bounds to transform branching diffusion rather than prune existing trees. The transformation suppresses reproduction, introduces controlled termination, and preserves unbiasedness with an envelope-dependent variance bound. We realize the transformed process through exact rejection sampling using standard screened WoS kernels, supported by certified spline envelopes and adaptive sampling and  further prove that suitable harmonic envelopes yield bounded expected reproduction counts even as conventional BBM diverges. % Across four 2D domains, WiE achieves $6.9$--$16.2\times$ time--variance efficiency gains near criticality, excluding envelope preprocessing.
\end{abstract}
\keywords{Walk-on-Spheres, Branching Brownian Motion, Semilinear PDEs, Certified Solution Bounds, Monte Carlo Methods}
\begin{teaserfigure}
    \centering
    \includegraphics[width=1\textwidth]{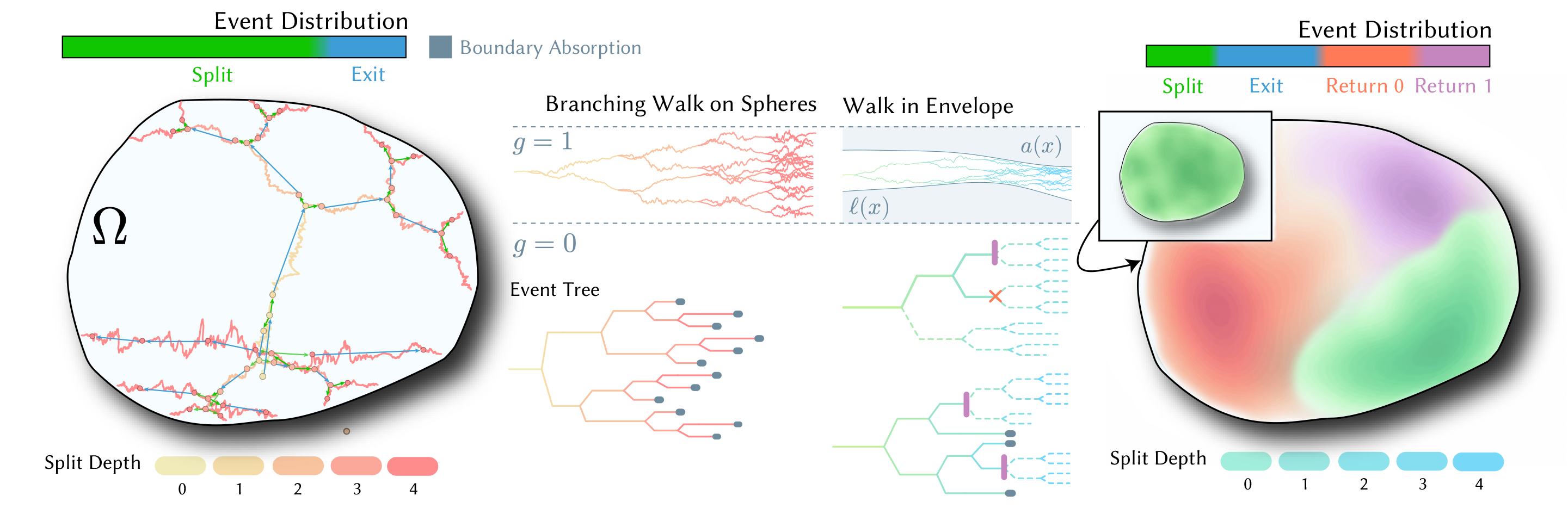}
    \caption{Walk in Envelopes (WiE) transforms branching WoS using certified lower and upper solution bounds. By reparameterizing the solution within a spatially varying envelope, WiE redistributes the original branching mechanism into splitting, return-0, and return-1 events, enabling early termination and pruning of redundant branches while preserving the target solution. The resulting computation trees are substantially smaller than those of conventional branching WoS.}
    \Description{1}
    \label{fig:teaser}
\end{teaserfigure}

\maketitle
\thispagestyle{fancy}\fancyfoot[C]
{\footnotesize Preprint}

\section{Introduction}
\label{sec:introduction}

Grid-free Monte Carlo methods have become an attractive alternative to mesh-based PDE solvers, enabling pointwise solution queries on complex domains without constructing a global discretization \cite{sawhney2020montecarlo}. While Walk-on-Spheres (WoS) efficiently resolves linear elliptic equations through boundary-hitting processes, extending this paradigm to semilinear PDEs requires handling nonlinear reactions. Branching Brownian motion (BBM) provides a natural probabilistic representation by replacing nonlinear terms with particle reproduction events \cite{agarwal2020branching}. Combined with WoS, this yields a grid-free solver for semilinear equations. However, eliminating Brownian time discretization does not eliminate branching complexity: each offspring initiates additional walks, and the expected particle population diverges as the branching intensity approaches a critical spectral threshold. This makes otherwise valid stochastic representations increasingly expensive precisely in strongly nonlinear regimes.

A natural strategy is to exploit approximate solution information to terminate branches whose remaining contributions are sufficiently small. Yet such pruning requires reliable bounds on unresolved subtrees, whose evaluation can itself be expensive. Approximate bounds may also introduce bias, while tighter estimates do not necessarily offset their computation costs. More fundamentally, pruning acts on a branching process that has already been generated, rather than preventing unnecessary reproduction. This motivates a different question: \textit{can certified solution information be used to construct an equivalent stochastic process that branches less frequently, without sacrificing correctness?}

We introduce \emph{Walk in Envelopes} (WiE), a branching WoS framework that transforms the stochastic representation using certified lower and upper solution envelopes. Given sub- and supersolutions $\ell\le u\le a$, we normalize the unknown solution as $v=(u-\ell)/(a-\ell)$. This induces a generalized Doob transform whose diffusion, reproduction, and termination mechanisms are determined by the envelope functions. For binary branching, the reproduction rate becomes $\lambda(a-\ell)$, while envelope residuals introduce additional success and failure termination events. The resulting estimator remains unbiased and has variance bounded by $(a-\ell)^2/4$. Moreover, we characterize its expected reproduction count through an elliptic equation, showing that a suitable harmonic envelope can keep this count uniformly bounded even as conventional BBM approaches its critical divergence threshold. Although branching-process transformations have established mathematical foundations \cite{englander2007branching}, WiE develops them into a certified computational framework for semilinear PDE solving.

The transformed process introduces spatially varying diffusion, which would ordinarily require simulating drifted trajectories. We avoid this difficulty by deriving an exact envelope-aware transition using the original screened WoS exit and Green kernels. Rejection sampling with certified local majorants realizes the transformed distribution without evaluating weighted kernel integrals. To obtain useful envelopes efficiently, we develop monotone refinement of cubic B-spline sub- and supersolutions, using Bernstein constraints to certify PDE inequalities throughout the domain by comparison principle. Adaptive screening, spatially localized rejection bounds, and lazy evaluation further reduce sampling overhead, while a spectral-workload criterion avoids unnecessary preprocessing in inexpensive regimes.

We evaluate WiE on four complex two-dimensional domains across branching intensities and accuracy requirements. Near criticality, WiE achieves $6.9$--$16.2\times$ improvements in time--variance efficiency and $4.7$--$14.6\times$ reductions in sampling time at matched accuracy, excluding one-time envelope construction. Its benefits increase with branching intensity, while a workload-aware fallback preserves the efficiency of conventional BBM in easier regimes. We additionally demonstrate extensions to projected walks on manifolds and semilinear parabolic equations.

Our main contributions are:
\begin{itemize}
    \item A certified envelope transformation of branching diffusion, with unbiased estimation, variance guarantees, and a spectral characterization of near-critical branching complexity.
    \item An exact envelope-aware WoS sampler using existing screened kernels, without explicitly simulating transformed diffusion.
    \item A practical solver combining certified spline-envelope refinement, efficient rejection sampling, and workload-aware fallback, with substantial near-critical acceleration.
\end{itemize}

\section{Related Work}
\label{sec:related_work}
\paragraph{Branching Diffusions and Solution-Guided Transformations.} BBM represents semilinear reaction--diffusion PDEs, beginning with McKean's Kolmogorov--Petrovskii--Piskunov representation \cite{mckean1975application}. Branching-diffusion solvers extend to polynomial nonlinearities, gradient dependence via marked particles \cite{henrylabordere2019branching}, and elliptic boundary-value problems on bounded domains \cite{agarwal2020branching}, but face integrability and branching-cost restrictions under strong branching. Process transformations are well established: nonlinear $h$-transforms alter both motion and reproduction \cite{englander2007branching}, while measure changes and backbone decompositions characterize conditioned branching and near-critical behavior \cite{salisbury2000nondegenerate,harris2016branching}. These underpin solution-guided representations. Our computational realization uses certified sub- and supersolutions to obtain nonnegative branching and killing rates, with exact sphere-transition sampling requiring neither the unknown solution nor drifted-trajectory simulation.

\paragraph{Monte Carlo Methods for Nonlinear Equations.} Monte Carlo solvers avoid global spatial discretization, but nested nonlinear expectations and recursive branching can sharply increase variance and work. Numerical branching methods handle polynomial nonlinear PDEs, subject to representation-dependent convergence and integrability \cite{henrylabordere2019branching,agarwal2020branching}. In graphics, \cite{tong2025practical} accelerate nonlinear recursive light transport through nonlinear path filtering that reuses intermediate evaluations and neural radiance caching that learns them. For nonlinear radiative boundary conditions, \cite{bao2026montecarlo} combine Picard-style iteration with Monte Carlo PDE solvers. These reduce nonlinear integral-equation costs through reuse, approximation, or iteration. Instead, we retain a branching estimator but change its probability law: certified envelopes suppress reproduction, bound its range, and preserve its expectation under stated assumptions. We reduce sample-generation work rather than only accelerate or approximate an existing recursion.

\paragraph{Walk-on-Spheres and Grid-Free PDE Solvers.} WoS originated for Dirichlet problems \cite{muller1956some} and entered geometric computing as a grid-free elliptic PDE solver for point queries \cite{sawhney2020montecarlo}. Extensions handle spatially varying coefficients via null scattering \cite{sawhney2022gridfree}, mixed Neumann--Dirichlet conditions via walk on stars \cite{sawhney2023walk}, and boundary-integral transport via walk on boundary \cite{sugimoto2023practical}. Boundary-value caching reuses estimates across queries \cite{miller2023boundary}, differential WoS computes parameter derivatives \cite{miller2024differential}, and projected walks solve surface PDEs \cite{sugimoto2024projected}. These advances mainly address linear PDE representations, boundary handling, or stochastic-estimate reuse. For semilinear PDEs, sphere jumps eliminate Brownian time discretization but not branching-tree growth. We target this remaining bottleneck through certified semilinear bounds and a transformed process with weighted exit and interior events sampled using the original screened WoS kernels.
\section{Envelope-Transformed Branching Walks}
\subsection{Preliminary: Branching Walk on Spheres}
Branching Brownian motion (BBM) provides a stochastic representation of semilinear elliptic PDEs by interpreting nonlinear reactions as particle branching events. Consider $F(u):=\mathcal L  u+\lambda(P(u)-u)=0$ with Dirichlet condition $u|_{\partial\Omega}=g$, where $P(u)=\sum_{k=0}^{K}p_ku^k$ is the offspring generating function and $\mathcal L = \frac12\Delta$. Each particle follows Brownian motion until either reaching the boundary or branching into $k$ independent offspring with probability $p_k$ at an exponential rate $\lambda$. The solution is recovered as $u(x)=\mathbb E[\prod_{i\in\mathcal T}g(X_i)]$, where $\mathcal T$ is the terminal leaves. 

For probabilistic semilinear elliptic PDEs, Dirichlet values naturally lie in $[0,1]$. For example, $g(y)\in[0,1]$ may specify the success probability of a particle absorbed at $y$, in which case $u(x)$ gives the probability that every terminal particle succeeds. This interpretation follows directly from the BBM representation $u(x)=\mathbb E[\prod_{i\in\mathcal L}g(X_i)]$, and ensures $0\le u\le1$.

Branching WoS accelerates this process by replacing Brownian trajectories with sphere transitions. At each particle position $x$, let $B=B(x,r)\subset\Omega$ be the largest inscribed ball, and denote by $H_\lambda$ and $G_\lambda$ its screened exit and Green kernels, respectively. Conditioning on whether the particle exits the ball before branching yields the integral equation induced by total expectation:
\begin{equation}
    u(x) = \int_{\partial B}H_\lambda(x,\mathrm{d}y)\,u(y)+\lambda\int_B G_\lambda(x,y)
\sum_{k=0}^{K}p_k u(y)^k\,\mathrm{d}y. 
\end{equation}

Specifically, the exit probability is $q_B=\int_{\partial B}H_\lambda(x,\mathrm{d}y)=1/I_0(r\sqrt{2\lambda})$ in $
 \mathbb R^2$, where $I_0$ denotes the modified Bessel function. With probability $q_B$, the particle jumps uniformly to $\partial B$ and continues walking; otherwise, an interior branching location $Y$ is sampled with density $\lambda G_\lambda(x,y)/(1-q_B)$, and $k\sim(p_k)$ offspring are added to the active particle queue at $Y$. The procedure repeats until all particles reach the domain boundary or terminate with zero offspring, producing one Monte Carlo sample from the product of terminal values. Thus, although WoS eliminates explicit trajectory discretization, each sphere transition still requires a branching decision, while every offspring introduces additional walks that must be maintained and evaluated. Canonical results show that BWoS crashes when the particle splitting rate surpasses the boundary-absorbing speed, i.e., $\lambda > \lambda_1$ where $\lambda_1$ is the least non-negative eigenvalue of $-\mathcal L$ in $\Omega$. 

\subsection{From Solution Bounds to Branching Processes}
Although WoS eliminates explicit Brownian trajectory discretization, its computational cost still grows with the branching tree. The expected cumulative particle population becomes unbounded as $\lambda$ approaches the principal Dirichlet eigenvalue $\lambda_1$ associated with $\Omega$, motivating the early termination of subtrees whose remaining contributions can be bounded. Let $\tau$ denote the first Brownian exit time from $\Omega$, and $h(x)=\mathbb E_x[g(B_\tau)]$ the harmonic extension of $g$. Several bounds follow directly from the BBM representation and its associated linear equations as Theorem~\ref{thm:prob-sol-bnds}. 

\begin{theorem}[Probabilistic Solution Bounds]
\label{thm:prob-sol-bnds}
For $0<\lambda<\lambda_1, P(u)=u^2$ and $0\le g\le1$, the solution of $F(u)=0$ with $u|{\partial\Omega}=g$ satisfies ${\max}\{L_0(x),L_1(x),L_2(x)\}\le u(x)\le U(x),$
where $U=h$, and $L_0(x)=\mathbb E_x[e^{-\lambda\tau}g(B\tau)]$, $L_1(x)=1-\mathbb E_x[e^{\lambda\tau}(1-g(B_\tau))]$, $L_2(x)=h(x)-\tfrac{\lambda}{4}\mathbb E_x[\tau]$. 
\end{theorem}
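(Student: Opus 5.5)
The plan is to rewrite $F(u)=0$ with $P(u)=u^2$ as $\tfrac12\Delta u-\lambda u+\lambda u^2=0$ and obtain each bound from one of three linearizations, combined with the comparison principle (or, equivalently, Feynman--Kac/Dynkin) on $\Omega$. Throughout I use $0\le u\le 1$, which follows from the BBM representation since $0\le g\le 1$. I also use $\lambda<\lambda_1$, which makes $\mathbb E_x[e^{\lambda\tau}]<\infty$ (the gauge theorem), so every expectation below is finite and the exponential-weight Feynman--Kac formulas hold.

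\emph{Upper bound $U=h$.} Since $u-u^2\ge 0$ on $[0,1]$, we have $\tfrac12\Delta u=\lambda(u-u^2)\ge 0$, so $u$ is subharmonic with boundary value $g$. Comparing with the harmonic $h$ gives $u\le h$. Probabilistically, I would apply Dynkin's formula to $u(B_{t\wedge\tau})$ and let $t\to\infty$.

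\emph{Lower bounds $L_0$ and $L_2$.} For $L_0$ I write $\tfrac12\Delta u-\lambda u=-\lambda u^2\le 0$, so $u$ is a supersolution of the screened operator $\tfrac12\Delta-\lambda$. Since $L_0$ solves $(\tfrac12\Delta-\lambda)L_0=0$ with $L_0=g$ on $\partial\Omega$, and the operator has a nonpositive potential, the maximum principle gives $u\ge L_0$. Alternatively, Feynman--Kac gives $u(x)=L_0(x)+\lambda\mathbb E_x\!\int_0^\tau e^{-\lambda s}u(B_s)^2\,ds\ge L_0(x)$. For $L_2$ I use $u-u^2\le \tfrac14$, so $\tfrac12\Delta u\le \lambda/4$. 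Dynkin's formula then gives $u(x)=h(x)-\lambda\mathbb E_x\!\int_0^\tau (u-u^2)(B_s)\,ds\ge h(x)-\tfrac{\lambda}{4}\mathbb E_x[\tau]$, where $\mathbb E_x[\tau]<\infty$ on a bounded domain.

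\emph{Lower bound $L_1$.} Set $w=1-u\in[0,1]$. A direct computation gives $\tfrac12\Delta w=-\tfrac12\Delta u=\lambda u(1-u)$ with $u(1-u)=(1-w)w\le w$. Hence $\tfrac12\Delta w+\lambda w=\lambda w^2\ge 0$: $w$ is a subsolution of $\tfrac12\Delta+\lambda$ with boundary data $1-g$. Feynman--Kac with the growing weight $e^{\lambda t}$ then yields
\[
w(x)=\mathbb E_x[e^{\lambda\tau}(1-g(B_\tau))]-\lambda\,\mathbb E_x\!\int_0^\tau e^{\lambda s}w(B_s)^2\,ds\le \mathbb E_x[e^{\lambda\tau}(1-g(B_\tau))].
\]
This is exactly $u\ge L_1$. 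Combining the three lower bounds gives $\max\{L_0,L_1,L_2\}\le u\le U$.

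The main obstacle is the $L_1$ step. The operator $\tfrac12\Delta+\lambda$ has a positive potential, so the classical maximum principle fails and one must use the hypothesis $\lambda<\lambda_1$. I would justify it either through the generalized maximum principle, using the positive principal eigenfunction as a weight since $\lambda<\lambda_1$, or through the gauge theorem. The gauge theorem gives $\sup_x\mathbb E_x[e^{\lambda\tau}]<\infty$, which makes $e^{\lambda(t\wedge\tau)}w(B_{t\wedge\tau})$ a uniformly integrable submartingale, so optional stopping applies as $t\to\infty$. Some care is also needed to ensure $u$ is regular enough ($C^2$ inside, continuous up to the boundary) for Itô/Dynkin. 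I would simply assume this, together with existence of the BBM solution, as the paper implicitly does.
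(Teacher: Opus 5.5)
Your proposal is correct and follows the paper's proof: the maximum principle gives $U=h$, the Feynman--Kac identity for $1-u$ (using $\mathcal L q+\lambda q=\lambda q^2$) gives $L_1$, and Dynkin's formula with $u(1-u)\le 1/4$ gives $L_2$; for $L_0$ you use the screened Feynman--Kac identity where the paper restricts the BBM expectation to paths exiting before the first branching, which is the same first-event decomposition written analytically, and your gauge-theorem justification of the $e^{\lambda\tau}$ weight under $\lambda<\lambda_1$ supplies a step the paper leaves implicit. One small slip: $\tfrac12\Delta w=-\tfrac12\Delta u=-\lambda u(1-u)$ (you dropped the minus sign), although the identity you actually use, $\tfrac12\Delta w+\lambda w=\lambda w^2$, is correct.
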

\begin{proof}
Since $F(u)=0$ implies $\mathcal Lu=\lambda u(1-u)\ge0$, the maximum principle gives $u\le h=U$. For $L_0$, restrict the BBM expectation to trajectories reaching the boundary before their first branching event; all omitted products are nonnegative, yielding $u\ge\mathbb E_x[e^{-\lambda\tau}g(B_\tau)]$. For $L_1$, let $q=1-u$. Then $\mathcal Lq+\lambda q=\lambda q^2\ge0$. The Feynman–Kac formula gives $q(x)=\mathbb E_x[e^{\lambda\tau}(1-g(B_\tau))]-\lambda\mathbb E_x[\int_0^\tau e^{\lambda t}q(B_t)^2\,\mathrm{d}t]$, hence $u\ge L_1$. Finally, Dynkin's formula gives $h-u=\lambda\mathbb E_x[\int_0^\tau u(B_t)(1-u(B_t))\,\mathrm{d}t]$. Since $u(1-u)\le1/4$, we obtain $u\ge L_2$.
\end{proof}

These bounds enable early termination of unresolved branching subtrees. Suppose the completed leaves contribute a product $\omega\in[0,1]$, while the active frontier contains particles at $x_1,\ldots,x_m$. By conditional independence, their remaining expected contribution is $R=\omega\prod_{i=1}^m u(x_i)$. With bounds $L\le u\le U$ immediately yield
\begin{equation}
B^-:=\omega\prod_{i=1}^m L(x_i)\le R\le
\omega\prod_{i=1}^m U(x_i):=B^+.
\label{eq:bbm-pruning}
\end{equation}

Thus, whenever $B^+-B^-\le2\varepsilon$, the remaining tree can be pruned and replaced by $(B^-+B^+)/2$, introducing at most $\varepsilon$ absolute bias in the final expectation, since the conditional expected contribution lies within $[B^-,B^+]$.

\begin{figure}[t]
    \centering
    \includegraphics[width=\linewidth]{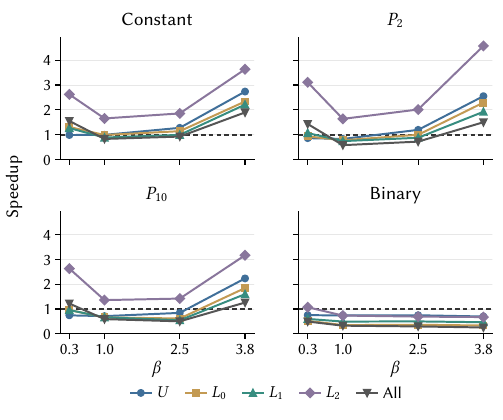}
    \caption{Pruning speedup relative to vanilla WoS--BBM (dashed $1\times$) across branching rates $\beta$, for constant, $P_2$, $P_{10}$, and binary-hemisphere boundary data (mean $0.5$).}
    \label{fig:wie_bbm_bounds_speedup_lines}
\end{figure}

Despite their validity, these bounds incur evaluation overhead and exhibit strongly case-dependent efficiency (Fig.~\ref{fig:wie_bbm_bounds_speedup_lines}). $L_0$ ignores branched trajectories, $L_1$ deteriorates with large boundary deficits, while $L_2$ relies on the worst-case bound $u(1-u)\le1/4$. Consequently, $L_2$ generally performs best on smooth boundaries, whereas discontinuous binary data defeats most pruning with binary boundary condition. Combining all bounds often increases overhead without proportional savings. This motivates certified envelopes that directly reduce branching rather than repeatedly pruning existing trees.

% why pruning is needed here since as branching rate approaches to $\lambda_1$, the particle amount never goes down. 

%Failure only use expensive and inaccurate bounds for pruning. hard to give a criterion working well in all scenes.  offer exp here

\subsection{Generalized Envelope Branching}
The preceding pruning strategy uses solution bounds to terminate subtrees after they have been generated, but does not alter the underlying branching dynamics. We instead construct an equivalent branching process whose reproduction and termination rates are determined directly by certified solution envelopes. This can be understood as a generalized Doob's $h$-transform, where the envelope width acts as the transforming function.

For $F(u)=0$, let $\ell$ and $a$ be certified sub- and supersolutions satisfying $0\le\ell\le u\le a\le1$, $F(\ell)\ge0$, and $F(a)\le0$. Define the domain envelop width $d:=a-\ell>0$, the normalized solution $v=(u-\ell)/d\in[0,1]$, and the nonnegative residuals $r_\ell=F(\ell)$ and $r_a=-F(a)$. By substituting $u=\ell+dv$ and expanding the offspring generating function, we obtain
$$
P(\ell+dv)=P(\ell)+\sum_{j=1}^K C_jv^j,
\qquad C_j=\frac{P^{(j)}(\ell)}{j!}d^j\ge0.
$$

The key is that the transformation changes not only the diffusion operator but also the branching mechanism into four parts -- drifted diffusion $\mathcal L^d v$, reduced branching $\sum_{j=2}^K {\lambda C_j (v^j-v)}/{d}$, returning one $r_c(1-v)/ d$ and zero (being killed) $-{r_a}v/{d}$ in the theorem below. 

\begin{theorem}[Generalized Envelope Branching]
With certified envelopes satisfying the conditions above and well-defined and integrable transformed branching process, $v=(u-\ell)/d$ satisfies
\begin{equation}
\begin{aligned}
0=\mathcal L^d v+\sum_{j=2}^K\frac{\lambda C_j}{d}(v^j-v) +  \frac{r_\ell}{d}(1-v)-\frac{r_a}{d}v,
\end{aligned}
\label{eq}
\end{equation}
where $\mathcal L^d=\mathcal L+\nabla\log d\cdot\nabla$. Consequently, the equivalent process branches into $j\ge2$ offspring at rate $\lambda C_j/d$, terminates successfully at rate $r_\ell/d$, and terminates unsuccessfully at rate $r_a/d$. If $W\in[0,1]$ is its terminal-product estimator, then $Z(x)=\ell(x)+d(x)W$ is unbiased $\mathbb E[Z(x)]=u(x)$ and 
\begin{equation}
\operatorname{Var}[Z(x)]
\le(u(x)-\ell(x))(a(x)-u(x))
\le{d(x)^2}/{4}.
\label{eq:var-gap}
\end{equation}
\end{theorem}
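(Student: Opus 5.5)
The argument has three parts: an algebraic identity for $v$, an unbiasedness statement obtained from the probabilistic representation of the transformed equation, and a variance bound that uses only $0\le W\le1$.

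\textbf{Step 1: the transformed equation.} Substitute $u=\ell+dv$ into $F(u)=0$. The Laplacian part expands as
\[
\mathcal L(dv)=d\,\mathcal Lv+\nabla d\cdot\nabla v+v\,\mathcal L d ,
\]
where the factor $\tfrac12$ in $\mathcal L=\tfrac12\Delta$ fixes the coefficient of the cross term. After dividing by $d$, this gives $\mathcal L v+\nabla\log d\cdot\nabla v$ up to that constant; we absorb it into the convention $\mathcal L^d=\mathcal L+\nabla\log d\cdot\nabla$ and keep it consistent throughout. The Taylor expansion stated above gives $P(\ell+dv)=P(\ell)+\sum_{j\ge1}C_jv^j$. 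Collecting terms yields
\[
0=d\,\mathcal L^d v+v\,\mathcal Ld+\mathcal L\ell+\lambda\Bigl(P(\ell)+\sum_j C_jv^j-\ell-dv\Bigr).
\]
Now eliminate $\mathcal Ld=\mathcal La-\mathcal L\ell$ using the residual identities
\[
\mathcal L\ell=r_\ell-\lambda\bigl(P(\ell)-\ell\bigr),\qquad
\mathcal La=-r_a-\lambda\bigl(P(a)-a\bigr),
\]
together with $P(a)=P(\ell)+\sum_j C_j$, which is the expansion at $v=1$. Then:
\begin{itemize}
\item the $\lambda(P(\ell)-\ell)$ terms cancel;
\item the linear $\lambda dv$ terms cancel against $-\lambda v(a-\ell)=-\lambda dv$;
\item the constant terms leave $r_\ell(1-v)-r_av$.
\end{itemize}
Writing $\lambda C_1v-\lambda v\sum_j C_j=-\lambda v\sum_{j\ge2}C_j$ produces $\sum_{j\ge2}\lambda C_j(v^j-v)$. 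Dividing by $d$ gives the stated equation. This is bookkeeping, and I would check each cancellation carefully.

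\textbf{Step 2: probabilistic reading and unbiasedness.} Read the equation as the generator equation of a branching diffusion with motion $\mathcal L^d$. The ingredients are:
\begin{itemize}
\item binary-or-higher branching into $j$ offspring at rate $\lambda C_j/d\ge0$;
\item an absorbing "return 1" event at rate $r_\ell/d$;
\item a "return 0" event at rate $r_a/d$;
\item boundary value $v|_{\partial\Omega}=(g-\ell)/(a-\ell)$, which lies in $[0,1]$ because $\ell\le g\le a$ on the boundary.
\end{itemize}
All rates are nonnegative because $C_j\ge0$ and the residuals are nonnegative. Under the stated well-definedness and integrability hypotheses, the standard McKean/Feynman--Kac argument applies: the terminal-product functional $W$ has $w(x)=\mathbb E_x[W]$ solving the same boundary-value problem. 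Uniqueness of the solution in $[0,1]$ (assumed along with integrability) then gives $w=v$, so $\mathbb E[Z]=\ell+dv=u$. I expect this identification to be the main obstacle, since a martingale argument requires control of the population. I would lean on the assumed integrability and follow the same route used for the original BBM representation in Theorem~\ref{thm:prob-sol-bnds}.

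\textbf{Step 3: variance bound.} Because $W\in[0,1]$, we have $W^2\le W$. Hence
\[
\operatorname{Var}[W]\le \mathbb E[W]-\mathbb E[W]^2=v(1-v).
\]
Therefore $\operatorname{Var}[Z]=d^2\operatorname{Var}[W]\le d^2v(1-v)=(u-\ell)(a-u)$. By AM--GM, this is at most $d^2/4$.
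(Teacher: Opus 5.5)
Your proposal is correct and follows the paper's proof essentially step for step: substitute $u=\ell+dv$, eliminate $\mathcal L d$ via the residual identities and $P(a)-P(\ell)=\sum_j C_j$, read off the nonnegative rates and boundary payoff, invoke the branching representation, and bound the variance via $W^2\le W$, which is the same inequality as the paper's $\mathbb E[(Z-\ell)(a-Z)]\ge0$. Two small remarks. First, your product-rule formula already yields exactly $\mathcal L+\nabla\log d\cdot\nabla$ for $\mathcal L=\tfrac12\Delta$, so there is no constant to absorb. Second, the extra uniqueness assumption in Step 2 can be dropped by running the bounded product martingale $\prod_i v(X_i(t))\in[0,1]$ directly, since $v$ is already a $[0,1]$-valued solution and almost-sure termination is implicit in $W$ being well defined.
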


\begin{proof}
The product rule gives $\mathcal L(dv)=d\mathcal L^dv+v\mathcal Ld$, where $\mathcal L^dv=d^{-1}[\mathcal L(dv)-v\mathcal Ld]$. Since $F(a)-F(\ell)=-r_a-r_\ell$ and $P(a)-P(\ell)=\sum_{j=1}^K C_j$, we have $\mathcal Ld-\lambda d=-r_\ell-r_a-\lambda\sum_{j=1}^K C_j$. Substituting these identities into $F(\ell+dv)=0$ yields $d\mathcal L^dv+r_\ell(1-v)-r_av+\lambda\sum_{j=1}^KC_j(v^j-v)=0$. The $j=1$ term vanishes, giving Eq.~\eqref{eq}. Its nonnegative rates define the stated branching process, with boundary payoff $(g-\ell)/d\in[0,1]$. The branching representation gives $\mathbb E[W]=v$, hence $\mathbb E[Z]=u$. Finally, $Z\in[\ell,a]$ implies $\mathbb E[(Z-\ell)(a-Z)]\ge0$, which yields Eq.~\eqref{eq}.
\end{proof}

Beyond preserving the original PDE solution, it also changes the expected branching complexity. For conventional BBM, the expected number of reproduction events $m_B$ satisfies
$(-\mathcal L-\lambda)m_B=\lambda$, with homogeneous Dirichlet boundary conditions. The transformed process instead obeys the following equation.

\begin{lemma}[Expected Branching Complexity]
Let $m_{\ell,a}(x)$ denote the expected number of reproduction events in the transformed process, assumed finite. Then $H=m_{\ell,a}d$ satisfies $H|_{\partial\Omega}=0$ and 
\begin{equation}
\left[-\mathcal L+\lambda(1-P'(a))\right]H
=\lambda[P(a)-P(\ell)-P'(\ell)d].
\label{eq:branch-complexity}
\end{equation}
\end{lemma}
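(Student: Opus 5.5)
The plan is to first write down the linear ``first-moment'' equation that $m_{\ell,a}$ satisfies for the transformed process of the Generalized Envelope Branching theorem, and then convert it to an equation for $H=m_{\ell,a}d$ using the same product-rule identity as in that theorem's proof.

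\emph{Step 1 (mean equation for $m_{\ell,a}$).} In the transformed process a particle diffuses with generator $\mathcal L^d$. It branches into $j\ge2$ offspring at rate $\lambda C_j/d$ and is killed (with payoff $1$ or $0$) at total rate $(r_\ell+r_a)/d$. Write $\beta:=\sum_{j\ge2}\lambda C_j/d+(r_\ell+r_a)/d$ for the total event rate. Conditioning on the first event, exactly as in the derivation of $(-\mathcal L-\lambda)m_B=\lambda$ for conventional BBM, gives the following. A branching event into $j$ children contributes $1+j\,m_{\ell,a}(y)$ by conditional independence of the offspring, and a killing contributes $0$. Hence, with homogeneous Dirichlet data,
\begin{equation*}
\mathcal L^d m_{\ell,a}-\beta\, m_{\ell,a}+\sum_{j\ge2}\frac{\lambda C_j}{d}\bigl(1+j\,m_{\ell,a}\bigr)=0 .
\end{equation*}
Rigorously, this follows from Dynkin's formula / Feynman--Kac for the killed $\mathcal L^d$-diffusion, applied to the renewal identity $m=\mathbb E\int_0^{\tau}e^{-\int_0^t\beta}\sum_j \frac{\lambda C_j}{d}(1+j m)\,\mathrm dt$. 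The finiteness assumption on $m_{\ell,a}$ is what allows this identity.

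\emph{Step 2 (multiply by $d$).} Use $d\,\mathcal L^d m=\mathcal L(dm)-m\,\mathcal L d=\mathcal L H-(H/d)\mathcal L d$. Then substitute the identity from the previous proof, $\mathcal L d=\lambda d-r_\ell-r_a-\lambda\sum_{j\ge1}C_j$. The killing terms $(r_\ell+r_a)H/d$ cancel exactly. What remains is
\begin{equation*}
\mathcal L H-\lambda H+\frac{\lambda H}{d}\Bigl[\sum_{j\ge1}C_j+\sum_{j\ge2}(j-1)C_j\Bigr]+\lambda\sum_{j\ge2}C_j=0 .
\end{equation*}

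\emph{Step 3 (identify coefficients).} The bracket equals $\sum_{j\ge1}jC_j$. Differentiating $P(\ell+dv)=P(\ell)+\sum_j C_jv^j$ in $v$ at $v=1$ gives $\sum_j jC_j=d\,P'(a)$, so the bracket divided by $d$ is $P'(a)$. Evaluating the same expansion at $v=1$ gives $\sum_{j\ge2}C_j=P(a)-P(\ell)-C_1$, with $C_1=P'(\ell)d$. Rearranging yields Eq.~\eqref{eq:branch-complexity}. The boundary condition $H|_{\partial\Omega}=0$ holds because $m_{\ell,a}=0$ on $\partial\Omega$ and $d$ is bounded there; if $d$ vanishes on part of the boundary, use that $m_{\ell,a}$ is finite.

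\emph{Main obstacle.} I expect the difficulty to lie in Step 1, not in the algebra. The issue is justifying the first-moment equation for a diffusion with singular-looking drift $\nabla\log d$ and rates $1/d$ that may blow up where $d\to0$. I would first prove the identity on subdomains $\{d>\delta\}$ with stopping at their exit. On these subdomains all coefficients are bounded and standard Feynman--Kac applies. I would then pass $\delta\to0$ by monotone convergence, using the assumed finiteness of $m_{\ell,a}$.
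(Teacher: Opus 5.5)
Your proposal is correct and matches the paper's proof essentially step for step. Your first-event equation rearranges to the paper's $-\mathcal L^d m_{\ell,a}+\bigl[\kappa-\sum_{j\ge2}(j-1)b_j\bigr]m_{\ell,a}=\sum_{j\ge2}b_j$; you then multiply by $d$, substitute the identity $\mathcal L d=\lambda d-r_\ell-r_a-\lambda\sum_{j\ge1}C_j$, and use $\sum_j jC_j=P'(a)d$ and $\sum_{j\ge2}C_j=P(a)-P(\ell)-P'(\ell)d$, exactly as the paper does (your remarks on localizing to $\{d>\delta\}$ and on $H|_{\partial\Omega}=0$ simply make explicit what the paper leaves implicit).
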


\begin{proof}
Writing $b_j=\lambda C_j/d$ and
$\kappa=(r_\ell+r_a)/d$, the first-event decomposition gives
\begin{equation}
-\mathcal L^d m_{\ell,a}
+\left[\kappa-\sum_{j=2}^{K}(j-1)b_j\right]m_{\ell,a}
=\sum_{j=2}^{K}b_j.
\end{equation}
Multiplying by $d$ and using
$\mathcal Ld=\lambda d-r_\ell-r_a-\lambda\sum_{j=1}^{K}C_j$,
together with
$\sum_{j=1}^{K}jC_j=P'(a)d$ and
$\sum_{j=2}^{K}C_j=P(a)-P(\ell)-P'(\ell)d$,
yields the stated equation.
\end{proof}

For binary branching, this reduces to $\left[-\mathcal L+\lambda(1-2a)\right]H
=\lambda d^2.$ Thus, the transformation changes both the source of branching work and the elliptic operator governing its propagation. In particular, narrowing the envelope suppresses the quadratic source term, while the upper envelope controls the effective reproduction potential. These effects explain how WiE can reduce branching complexity even when conventional BBM approaches its critical regime. Also, there exists a harmonic envelope such that the cost of WiE keeps finite as the splitting rate approaches the divergence limit of vanilla WoS in theory as Theorem~\ref{thm:bdd-near-critical} reveals.

\begin{theorem}[Bounded near-critical branching with harmonic envelopes]
\label{thm:bdd-near-critical}
    Let $\Omega\subset\mathbb R^n$ be a bounded, connected $C^{2,\alpha}$ domain, and let $\lambda_1>0$ be the principal Dirichlet eigenvalue of $-\mathcal L$, where $\mathcal L=\frac12\Delta$. Suppose $g\in C^{2,\alpha}(\partial\Omega)$ satisfies $0\le g\le1$, $g\not\equiv0$, and $g\not\equiv1$. Let $h$ be its harmonic extension, $\mathcal Lh=0$ in $\Omega$ and $h=g$ on $\partial\Omega$. For the binary branching equation $\mathcal Lu+\lambda(u^2-u)=0$, choose the certified envelopes $\ell=0$ and $a=h$. Then the expected reproduction count $m_\lambda$ of the transformed process satisfies
\begin{equation}
\sup_{0<\lambda<\lambda_1}|m_\lambda|_{L^\infty(\Omega)}<\infty.
\label{eq:wie-critical-bound}
\end{equation}
In contrast, the expected reproduction count $m_\lambda^{\mathrm{BBM}}$ of conventional binary BBM diverges as $\lambda\uparrow\lambda_1$: for every fixed $x\in\Omega$, $m_\lambda^{\mathrm{BBM}}(x)=\Theta((\lambda_1-\lambda)^{-1})$.
\end{theorem}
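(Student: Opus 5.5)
The plan is to read off the equation for $H=m_\lambda h$ from the Expected Branching Complexity Lemma, prove a $\lambda$-uniform spectral gap for the resulting Schrödinger operator, and then convert sup-norm and boundary-gradient bounds on $H$ into a bound on $m_\lambda=H/h$.

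\emph{Setup.} With $P(u)=u^2$, $\ell=0$ and $a=h$ we have $d=h$. We check that $h>0$ in $\Omega$: $h\ge0$, and $h\not\equiv0$ because $g\not\equiv0$, so the strong maximum principle gives $h>0$. Since $g\not\equiv1$, the same argument gives $h<1$ in $\Omega$. The certificate conditions hold because $F(0)=0$ and $F(h)=\lambda h(h-1)\le0$, so $r_\ell=0$ and $r_a=\lambda h(1-h)$. The Expected Branching Complexity Lemma then says that $H:=m_\lambda h$ vanishes on $\partial\Omega$ and solves
\begin{equation*}
\bigl[-\mathcal L-\lambda+2\lambda(1-h)\bigr]H=\lambda h^2 .
\end{equation*}
I would identify $m_\lambda$ with the minimal nonnegative solution of this equation via the Feynman--Kac representation for the $h$-transformed diffusion. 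This removes the finiteness assumption, provided the operator is shown to be coercive.

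\emph{Uniform spectral gap.} Let $\mu(\lambda)$ be the principal Dirichlet eigenvalue of $-\mathcal L-\lambda+2\lambda(1-h)$.
\begin{itemize}
\item By the Rayleigh quotient, $\mu(\lambda)\ge\lambda_1-\lambda>0$ for $\lambda<\lambda_1$.
\item At $\lambda=\lambda_1$ we still have $\mu(\lambda_1)>0$. Test the quotient with the positive first eigenfunction $\varphi_1$ of $-\mathcal L$: the potential $2\lambda_1(1-h)$ is nonnegative and positive on an open set, so the quotient strictly exceeds $0$, and attainment of the infimum gives strict positivity.
\item $\mu$ depends continuously on $\lambda\in[0,\lambda_1]$, since the potential depends linearly on $\lambda$ and is bounded.
\end{itemize}
Hence $\mu_*:=\inf_{[0,\lambda_1]}\mu>0$. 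From the gap I would get a uniform sup bound $\|H\|_\infty\le C$, independent of $\lambda$. The route is the Green-function bound $\|(-\mathcal L+V_\lambda)^{-1}\|_{L^\infty\to L^\infty}\le C(\mu_*,\Omega)$, where $V_\lambda:=-\lambda+2\lambda(1-h)$. To prove that bound I would use an $L^2$ estimate followed by elliptic bootstrap, with the potentials uniformly bounded.

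\emph{Boundary behaviour; this is the main obstacle.} Because $h$ may vanish on parts of $\partial\Omega$ where $g=0$, a sup bound on $H$ does not by itself bound $m_\lambda=H/h$. I would handle this with two-sided distance comparisons.
\begin{itemize}
\item Lower bound on $h$: $h$ is harmonic, positive in $\Omega$, and $\Omega$ is $C^{2,\alpha}$. The Hopf lemma together with interior positivity on compact subsets gives $h(x)\ge c\,\mathrm{dist}(x,\partial\Omega)$.
\item Upper bound on $H$: Schauder estimates up to the boundary give $\|H\|_{C^{1,\alpha}(\bar\Omega)}\le C$, using the uniform sup bound, the uniformly bounded $C^\alpha$ potential and source, and $H|_{\partial\Omega}=0$. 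Hence $H\le C\,\mathrm{dist}(x,\partial\Omega)$.
\end{itemize}
Dividing gives $\sup_\lambda\|m_\lambda\|_\infty\le C/c<\infty$. The delicate points are the uniformity of these estimates in $\lambda$ and the identification of the probabilistic $m_\lambda$ with this PDE solution.

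\emph{BBM divergence.} For conventional BBM, $(-\mathcal L-\lambda)m^{\mathrm{BBM}}_\lambda=\lambda$ with zero boundary data. Expand in the Dirichlet eigenbasis $\{\varphi_k\}$: the first mode contributes $\lambda\langle1,\varphi_1\rangle\varphi_1(x)/(\lambda_1-\lambda)$. This is of exact order $(\lambda_1-\lambda)^{-1}$ at each fixed $x$, because $\varphi_1(x)>0$ and $\langle1,\varphi_1\rangle>0$. The remaining modes form the solution of the problem projected onto $\varphi_1^\perp$. There the spectral gap is $\lambda_2-\lambda\ge\lambda_2-\lambda_1>0$, so this part is bounded uniformly in $L^\infty$ by the same resolvent argument as above. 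Together these give $m^{\mathrm{BBM}}_\lambda(x)=\Theta((\lambda_1-\lambda)^{-1})$.
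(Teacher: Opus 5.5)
Your proposal is correct and follows the paper's proof essentially step for step. You use the lemma to get $[-\mathcal L+\lambda(1-2h)]H_\lambda=\lambda h^2$ and prove a $\lambda$-uniform spectral gap by the same equality-case argument at $\lambda=\lambda_1$ (state explicitly that a zero-energy minimizer must be a multiple of $\varphi_1$ by simplicity of $\lambda_1$). You then derive uniform $C^1$ bounds giving $H_\lambda\le C\,\mathrm{dist}(x,\partial\Omega)$, the Hopf-based bound $h\ge c\,\mathrm{dist}(x,\partial\Omega)$, and the principal-mode resolvent expansion for BBM. The only differences are minor. You obtain uniformity of the gap from $\mu(\lambda)\ge\lambda_1-\lambda$ plus continuity on $[0,\lambda_1]$, whereas the paper uses the interpolation $A_\lambda=(1-t)(-\mathcal L)+tA_{\lambda_1}$, which yields the explicit constant $\min\{\lambda_1,\kappa\}$. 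You also flag the identification of the probabilistic $m_\lambda$ with the PDE solution, which the paper simply absorbs into the lemma's finiteness hypothesis.
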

\begin{proof}
    The strong maximum principle gives $0<h<1$ in $\Omega$. Since $F(0)=0$ and $F(h)=\lambda(h^2-h)\le0$, the proposed envelopes are certified, with width $d=h>0$ in the interior. By Lemma 3.3, $H_\lambda:=hm_\lambda$ solves
\begin{equation}
A_\lambda H_\lambda=\lambda h^2,\qquad H_\lambda|{\partial\Omega}=0,\qquad
A\lambda=-\mathcal L-\lambda+2\lambda(1-h).
\label{eq:wie-critical-H}
\end{equation}
We first show that $A_\lambda$ has a spectral gap uniform up to $\lambda_1$. At $\lambda=\lambda_1$, its quadratic form is
\begin{equation}
\langle f,A_{\lambda_1}f\rangle
=\underbrace{\|\nabla f\|_2^2 /2-\lambda_1\|f\|_2^2}_{\ge0}
+2\lambda_1\int_\Omega(1-h)f^2\,\mathrm{d}x.
\label{eq:wie-critical-form}
\end{equation}
If its lowest eigenvalue were zero, equality would force $f$ to be a multiple of the positive first Dirichlet eigenfunction $\phi_1$ and simultaneously $\int_\Omega(1-h)f^2\,\mathrm{d}x=0$, contradicting $1-h>0$ in $\Omega$. Hence $\kappa:=\lambda_{\min}(A_{\lambda_1})>0$. Writing $t=\lambda/\lambda_1$ gives $A_\lambda=(1-t)(-\mathcal L)+tA_{\lambda_1}$, so the min–max principle yields
\begin{equation}
\lambda_{\min}(A_\lambda)\ge(1-t)\lambda_1+t\kappa\ge\min{\lambda_1,\kappa}>0.
\label{eq:wie-critical-gap}
\end{equation}
Uniform invertibility and standard Dirichlet elliptic estimates therefore imply $|H_\lambda|{C^1(\overline\Omega)}\le C$, with $C$ independent of $\lambda\in(0,\lambda_1)$. Since $H\lambda=0$ on $\partial\Omega$, $|H_\lambda(x)|\le C,\mathrm{dist}(x,\partial\Omega)$. On the other hand, positivity of $h$ in the interior, the Hopf boundary lemma at points where $g=0$, and compactness of $\partial\Omega$ imply $h(x)\ge c,\mathrm{dist}(x,\partial\Omega)$ for some $c>0$. Thus $m_\lambda=H_\lambda/h\le C/c$ throughout $\Omega$, proving the bound.
For conventional BBM, $(-\mathcal L-\lambda)m_\lambda^{\mathrm{BBM}}=\lambda$ with zero Dirichlet boundary data. The isolated principal-eigenvalue resolvent expansion gives
\begin{equation}
m_\lambda^{\mathrm{BBM}}(x)=
\frac{\lambda_1\langle 1,\phi_1\rangle\phi_1(x)}{\lambda_1-\lambda}+O_x(1),
\qquad \lambda\uparrow\lambda_1,
\label{eq:wie-critical-bbm}
\end{equation}
where $|\phi_1|_2=1$. This proves the stated divergence.
\end{proof}
% how an envelop squeezes down the branching rate by introducing a killing rate, which is in essence a sort of doob's h-transform. 

% Theorem 1 (Certified Envelope Branching Representation). variance upper bound for wie

% also, a theorem here to show that smaller gap does not mean lower branching rate. there exists a trade-off among low branching rate, low variance and rejection rate. 
\subsection{Walk in Envelopes: Exact Walk-on-Spheres Transition}

The generalized Doob transform introduces a spatially varying drift $\nabla\log d$, seemingly requiring a new sampler for the transformed diffusion, which instead be sampled exactly by reweighting the original WoS transitions, without explicitly simulating drifted trajectories.

% For a maximal inscribed ball $B=B(x,r)\subset\Omega$, let $H_\lambda(x,\mathrm{d}y)$ and $G_\lambda(x,y)$ denote the screened sphere-exit and Green kernels of $\lambda-\mathcal L$, respectively, satisfying $\int_{\partial B}H_\lambda+\lambda\int_B G_\lambda=1$. % Recall $d=a-\ell$, $r_\ell=F(\ell)$, $r_a=-F(a)$, and $C_j=P^{(j)}(\ell)d^j/j!$.

\begin{theorem}[Exact Envelope-Aware WoS Transition]
For any certified envelopes $0\le\ell<a\le1$ satisfying $F(\ell)\ge0$ and $F(a)\le0$, the normalized solution $v=(u-\ell)/d$ satisfies
\begin{equation}
\begin{aligned}
d(x)v(x)
={}&\int_{\partial B}H_\lambda(x,\mathrm{d}y)\,d(y)v(y)\\
&+\int_B G_\lambda(x,y)
\left[r_\ell(y)+\lambda\sum_{j=1}^{K}C_j(y)v(y)^j\right]\,\mathrm{d}y.
\end{aligned}
\label{eq:envelope-wos}
\end{equation}
Moreover, Eq.~\eqref{eq:envelope-wos}  admits an exact stochastic transition using only the original screened WoS kernels and pointwise envelope evaluations.
\end{theorem}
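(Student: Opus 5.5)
Equation~\eqref{eq:envelope-wos} is just the screened ball representation (the first integral equation of the preliminary section) applied to the function $w:=dv=u-\ell$ rather than to $u$. In the resulting equation, the source term is what is left after the linear part $\lambda w$ is absorbed into the screening.

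\textbf{The identity.} From $F(u)=0$ and $F(\ell)=r_\ell$, subtracting gives
\[
\mathcal L w-\lambda w=-r_\ell-\lambda\bigl[P(\ell+dv)-P(\ell)\bigr].
\]
The Taylor expansion stated before the Generalized Envelope Branching theorem turns the bracket into $\sum_{j=1}^K C_jv^j$. Hence
\[
(\lambda-\mathcal L)w=r_\ell+\lambda\sum_{j=1}^K C_jv^j .
\]
Note that the $j=1$ term is kept here, unlike in Eq.~\eqref{eq}. Now apply Dynkin's formula to $e^{-\lambda t}w(B_t)$, stopped at the exit time $\tau_B$ of the ball $B=B(x,r)\subset\Omega$. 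This is justified because $w$ is $C^2$ inside $\Omega$ and bounded on $\overline B$. It gives
\[
w(x)=\mathbb E_x\!\left[e^{-\lambda\tau_B}w(B_{\tau_B})\right]+\mathbb E_x\!\int_0^{\tau_B}e^{-\lambda t}\,\bigl[(\lambda-\mathcal L)w\bigr](B_t)\,\mathrm dt .
\]
By definition, these two expectations are the integrals against $H_\lambda$ and $G_\lambda$. This yields Eq.~\eqref{eq:envelope-wos}.

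\textbf{The exact transition.} Divide Eq.~\eqref{eq:envelope-wos} by $d(x)$ and read it as a sub-stochastic kernel acting on $v$. The boundary part has weight $H_\lambda(x,\mathrm dy)d(y)/d(x)$. The interior part splits into three pieces:
\begin{itemize}
\item a return-one density $G_\lambda r_\ell/d(x)$;
\item a splitting density $\lambda G_\lambda C_j/d(x)$ for $j\ge2$;
\item a linear ($j=1$) density $\lambda G_\lambda C_1/d(x)$, which is a continue-walking event.
\end{itemize}
Any leftover mass is the return-zero event.

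To show the leftover mass is nonnegative, take the same Dynkin identity applied to $d$ itself. Using $(\lambda-\mathcal L)d=r_\ell+r_a+\lambda\sum_{j\ge1}C_j$, it gives
\[
d(x)=\int_{\partial B} H_\lambda d+\int_B G_\lambda\Bigl(r_\ell+r_a+\lambda\sum_{j}C_j\Bigr).
\]
So the total weight of the listed events is $1-\int_B G_\lambda r_a/d(x)\le1$, and the deficit is exactly the return-zero (killing) mass. This matches the $r_a$ killing term in Eq.~\eqref{eq}.

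\textbf{Sampling.} Draw a proposal from the original screened WoS step. With probability $q_B$ take a uniform point on $\partial B$; otherwise take an interior point with density $\lambda G_\lambda/(1-q_B)$. Then accept each event type with probability equal to the ratio of target weight to proposal weight:
\begin{itemize}
\item on the boundary, accept with probability $d(y)/(d(x)M)$;
\item in the interior, split the probability $\bigl(r_\ell/\lambda+\sum_j C_j\bigr)/(d(x)M)$ according to the branches listed above.
\end{itemize}
Here $M$ is a certified local majorant of $d/d(x)$ and of $(r_\ell+r_a)/\lambda+\sum_j C_j$ over $\overline B$. If the proposal is rejected, repeat. Because a rejection consumes no target mass, repeating until acceptance samples the normalized kernel exactly, with the return-zero event included as a proper outcome.

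This needs only $H_\lambda$, $G_\lambda$, and pointwise evaluations of $\ell$, $a$, $F(\ell)$ and $F(a)$. It requires no integral of $d$ against the kernels and no simulation of the drift $\nabla\log d$.

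\textbf{Expected main obstacle.} The delicate step is making the thinning exact without knowing the normalizing constants. I would handle this by treating return-zero as an explicit outcome of a proposal whose total mass is dominated by $M$. Then "reject and retry" is a geometric mixture that reproduces the target sub-stochastic law exactly, including its killing mass.

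A secondary concern is the regularity needed for Dynkin's formula near points where $\ell$ or $a$ are only piecewise smooth, as with splines. I would address this with a mollification or weak-solution argument, using the fact that $B$ stays inside $\Omega$.
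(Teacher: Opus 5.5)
This is essentially the paper's proof: the same identity $(\lambda-\mathcal L)(u-\ell)=r_\ell+\lambda\sum_{j\ge1}C_jv^j$ fed into the screened Green representation on $B$, the same representation of $d$ used to normalize (your ``deficit'' $\int_B G_\lambda r_a/d(x)$ is exactly the paper's $\pi_-$ event), and the paper's rejection sampler with one common majorant $d(x)M$ in place of the separate $M_E\ge\sup_{\partial B}d$ and $M_I\ge\sup_B\omega_I$ (exact, but with a lower acceptance rate). Fix two slips in your sampling step: $M$ must dominate $\omega_I/d(x)$ rather than $\omega_I$, or your interior acceptance probability can exceed one; and your interior bullet must also accept return-zero with probability $r_a/(\lambda d(x)M)$, because folding that mass into reject-and-retry, as the bullet literally reads, samples the kernel conditioned on survival and biases the estimator upward whenever $r_a\not\equiv0$, contradicting your own closing remark.
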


\begin{proof}
Expanding $F(\ell+dv)=0$ using $P(\ell+dv)=P(\ell)+\sum_{j=1}^KC_jv^j$ gives $(\lambda-\mathcal L)(dv)=r_\ell+\lambda\sum_{j=1}^KC_jv^j$. Applying the screened Green representation on $B$ establishes Eq.~\eqref{eq:envelope-wos}. Set $v\equiv1$ on the right-hand side of Eq.~\eqref{eq:envelope-wos} and have 
\begin{equation}
\lambda \omega_I
:=r_\ell+r_a+\lambda\sum_{j=1}^KC_j
=\lambda d-\mathcal Ld.
\label{eq:envelope-weight}
\end{equation}
Since all terms are nonnegative, $\omega_I\ge0$. Applying the screened Green representation to $d$ yields
\begin{equation}
    d(x)=\int_{\partial B}H_\lambda(x,\mathrm{d}y)d(y)
+\allowbreak \lambda \allowbreak  \int_B G_\lambda(x,y)\omega_I(y)\,\mathrm{d}y.
\label{eq:envelope-normalization}
\end{equation}
Dividing by $d(x)$ therefore defines a probability distribution over two mutually exclusive outcomes: exiting the sphere with density $H_\lambda(x,\mathrm{d}y)d(y)/d(x)$, or triggering an interior event with density $\lambda G_\lambda(x,y)\omega_I(y)/d(x)$. For an interior event at $y$, dividing the nonnegative terms in Eq.~\eqref{eq:envelope-weight} by $\lambda \omega_I(y)$ gives
\begin{equation}
\pi_+(y):=\frac{r_\ell(y)}{\lambda \omega_I(y)},\quad
\pi_-(y):=\frac{r_a(y)}{\lambda \omega_I(y)},\quad
\pi_j(y):=\frac{C_j(y)}{\omega_I(y)}.
\label{eq:envelope-event-probabilities}
\end{equation}
These probabilities correspond to successful termination, unsuccessful termination, and $j$ offspring, respectively, with $\pi_++\pi_-+\sum_j\pi_j=1$. Multiplying the interior density by each event probability recovers the corresponding integrand in Eq.~\eqref{eq:envelope-wos}; unsuccessful termination contributes zero. Hence the stochastic transition reproduces the original integral equation exactly.
\end{proof}

The remaining challenge is sampling this distribution without evaluating its weighted integrals. Although Eq.~\eqref{eq:envelope-normalization} provides the normalizing constant $d(x)$, directly sampling a density proportional to $H_\lambda d$ or $G_\lambda \omega_I$ would require a new sampler for each spatially varying envelope. Instead, we use rejection sampling to recover these densities from the original WoS proposals.

Specifically, the target exit and interior transition measures are proportional to $H_\lambda(x,\mathrm{d}y)d(y)$ and $\lambda G_\lambda(x,y)\omega_I(y)\,\mathrm{d}y$, respectively. We therefore introduce two certified constants $M_E\ge\sup_{\partial B}d$ and $M_I\ge\sup_B\omega_I$ to dominate these weighted measures by distributions that are already straightforward to sample. Let $q_B=\int_{\partial B}H_\lambda(x,\mathrm{d}y)$ denote the original sphere-exit probability. The corresponding proposal masses are $q_BM_E$ for exits and $(1-q_B)M_I$ for interior events, giving the computable normalization $Q_B=q_BM_E+(1-q_B)M_I$.

We first choose an exit proposal with probability $q_BM_E/Q_B$, otherwise an interior proposal. Its position $Y$ is drawn from $H_\lambda/q_B$ or $\lambda G_\lambda/(1-q_B)$, respectively, and accepted with probability $d(Y)/M_E$ or $\omega_I(Y)/M_I$. For example, the joint probability measure of proposing and accepting an exit at $y$ is $\frac{q_BM_E}{Q_B}
\frac{H_\lambda(x,\mathrm{d}y)}{q_B}
\frac{d(y)}{M_E}
=\frac{H_\lambda(x,\mathrm{d}y)d(y)}{Q_B}.$

Similarly, the accepted interior measure is $\lambda G_\lambda(x,y)\omega_I(y)\,\mathrm{d}y/Q_B$. Their total mass is $d(x)/Q_B$ by Eq.~\eqref{eq:envelope-normalization}. Conditioning on acceptance therefore divides both measures by $d(x)/Q_B$, recovering exactly the target transition probabilities. Thus, the accpetance rate is $d(x)/Q_B$ and rejected candidates can simply be resampled. Notice that neither the target normalization $d(x)$ nor any weighted Green integral needs to be evaluated during sampling.

% This construction uses the original Brownian sphere kernels regardless of whether the certified envelopes are harmonic. Their spatial variation is handled entirely by the acceptance weights and the residual-dependent interior events, while arbitrary offspring distributions enter only through the coefficients $C_j$.

\subsection{Extensions to Other Semilinear PDEs}
\label{sec:extensions}

Figure~\ref{fig:toy_manifold} compares projected WiE on a cylinder (top) and heat-ball WiE on the unit square (bottom) with independent finite-difference reference solutions.

\begin{figure}
    \centering
    \includegraphics[width=\linewidth]{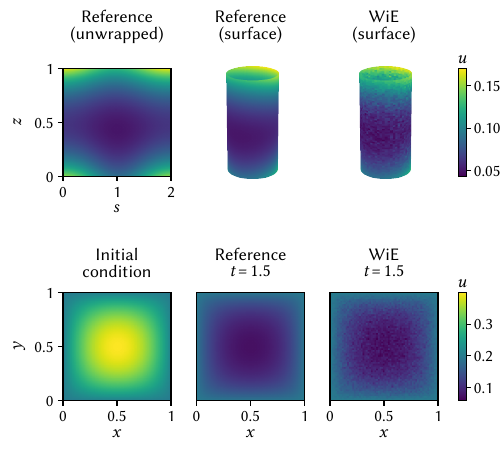}
    \caption{
Toy validation of two WiE extensions on the overall correctness and extensibility of WiE. Top: projected WiE on an embedded cylinder for $\frac12\Delta_{\mathcal M}u+\beta(u^2-u)=0$ with $\beta=0.97\lambda_1=0.97\pi^2/2$. Bottom: heat-ball WiE for $\partial_tu=\frac12\Delta u+\beta(u^2-u)$ on the unit square at $t=1.5$, with the same $\beta$. The six panels show, from left to right: cylinder reference in $(s,z)$ coordinates, cylinder reference on the embedded surface, projected WiE prediction, heat initial condition, heat reference at $t=1.5$, and heat-ball WiE prediction. We use 4096 full-field query points, with 72 samples per point on the cylinder and 128 samples per point for the heat equation. The measured time--variance speedups are about $24$--$54\times$ and $1.8$--$3.8\times$, respectively.
}
    \label{fig:toy_manifold}
\end{figure}

 \subsubsection{Projected Walk in Envelopes on Manifolds}
For an embedded manifold $\mathcal M\subset\mathbb R^n$, the envelope transformation extends by replacing $\mathcal L$ with the Laplace--Beltrami operator $\mathcal L_{\mathcal M}=\frac12\Delta_{\mathcal M}$, preserving the branching and termination coefficients. Following PWoS~\cite{Sugimoto:2024:PWoS}, we approximate intrinsic transitions using ambient Euclidean WoS kernels and the closest-point projection $y^*=\pi(y)$. Specifically, Eq.~\eqref{eq:envelope-wos} becomes

\begin{equation}
\begin{aligned}
v_\pi(x)&=\frac{1}{N_\pi(x)}\bigg[
\int_{\partial B}H_\lambda(x,\mathrm{d}y)\,d(y^*)v_\pi(y^*)\\
&+\int_B G_\lambda(x,y)
\left(r_\ell(y^*)+\lambda\sum_{j=1}^K C_j(y^*)v_\pi(y^*)^j\right)\,\mathrm{d}y\bigg],
\end{aligned}
\label{eq:projected-envelope-walk}
\end{equation}
where the projected normalization is
\begin{equation}
N_\pi(x)=\int_{\partial B}H_\lambda(x,\mathrm{d}y)d(y^*)
+\lambda\int_B G_\lambda(x,y)\omega_I(y^*)\,\mathrm{d}y.
\end{equation}

The rejection sampler from the preceding section applies unchanged, evaluating all envelope-dependent weights at $y^*$ instead of $y$. Since projected ambient kernels only approximate intrinsic Brownian transitions, $N_\pi(x)$ need not equal $d(x)$, introducing the same type of geometric approximation as PWoS without requiring intrinsic surface Green kernels.

\subsubsection{Semilinear Parabolic PDEs}
For $\partial_tu=\mathcal Lu+\lambda(P(u)-u)$, replace $\mathcal L$ by $\mathcal D=\mathcal L-\partial_t$ in the envelope residuals, requiring $r_\ell=\mathcal D\ell+\lambda(P(\ell)-\ell)\ge0$, $r_a=-\mathcal Da-\lambda(P(a)-a)\ge0$, and valid initial/lateral bounds. The coefficients $C_j$ and event rules retain their previous forms, but the interior weight becomes $W_\lambda=\lambda d-\mathcal Dd=r_\ell+r_a+\lambda\sum_j C_j$. Replace each spatial ball by a backward heat ball~\cite{Deaconu2017WOMS}, with elapsed time $s$ and displacement $z$: $\mathcal H_\eta=\{(s,z):0<s<\eta,\ |z|<R_\eta(s)\},$ $R_\eta(s)^2={ns\log(\eta/s)}$ with  $\eta<\min\{t,e\,\mathrm{dist}(x,\partial\Omega)^2/n\}.$ Its exit pair is sampled as $A\sim\Gamma(n/2+1,n/2),
S=\eta e^{-A},Y=R_\eta(S)\omega, \omega\sim\mathrm{Unif}(\mathbb S^{n-1}).$

Interior locations follow the \emph{killed occupation density} $q_\eta(s,z)\,\mathrm{d}s\,\mathrm{d}z=\mathbb P(B_s\in \mathrm{d}z,\,\sigma>s)\,\mathrm{d}s$, rather than the exit distribution: $q_\eta(s,z)=\big[p(s,z)-(2\pi\eta)^{-n/2}\big]\mathbf1_{(s,z)\in\mathcal H_\eta}$ with $p(s,z)=(2\pi s)^{-n/2}e^{-|z|^2/(2s)}$. Let $m_0=\int_{\mathcal H_\eta}q_\eta\,\mathrm{d}s\,\mathrm{d}z
=\eta\left(\frac n{n+2}\right)^{n/2+1}.$ To sample $q_\eta/m_0$, propose $S\sim\mathrm{Unif}(0,\eta)$ and $Y\sim\mathcal N(0,SI_n)$; accept with probability $\mathbf1_{(S,Y)\in\mathcal H_\eta}[1-(2\pi\eta)^{-n/2}/p(S,Y)]$. With $M_E\ge\sup_{\partial\mathcal H_\eta}d$ and $M_I\ge\sup_{\mathcal H_\eta}W_\lambda$, replace the preceding section's proposal masses by $M_E$ and $m_0M_I$ (thus $Q=M_E+m_0M_I$). At $(t-S,x+Y)$, accept exits with $e^{-\lambda S}d/M_E$ and interiors with $e^{-\lambda S}W_\lambda/M_I$; all offspring and termination rules remain unchanged. At the initial boundary, return $(u_0-\ell)/d$.

\section{Efficient Envelope Construction and Sampling}
\subsection{Monotone Envelope Refinement for Full Field}
% ways to build up good envelopes. 
% The preceding transformation allows any certified pair of envelopes to define an exact branching process. However, different envelopes can induce substantially different computational costs. The objective is therefore not merely to approximate the PDE solution, but to construct envelopes that minimize the cost of evaluating it over an entire domain.

The transformation admits many certified envelopes with different computational costs. We construct tight, reusable bounds for full-field evaluation through four steps: a comparison principle certifies bounds from PDE residuals and boundary conditions; query-weighted envelope width approximates the intractable time–variance objective; conservative affine constraints monotonically refine an initially feasible pair; and cubic B-splines with cellwise Bernstein constraints reduce continuous certification to finite-dimensional linear programs. Certification is preserved throughout, while preprocessing is amortized across repeated queries.

\begin{lemma}[Comparison Principle for Certified Envelopes]
For a bounded Lipschitz domain $\Omega\subset\mathbb R^n$, assume $\lambda\ge0$ and $\lambda\sup_{s\in[0,1]}\allowbreak (P'(s)-1)<\lambda_1.$ Suppose $u,\ell,a\in C^2(\Omega)\cap C(\overline\Omega)\cap H^1(\Omega)$ take values in $[0,1]$ and satisfy $F(u)=0, F(\ell)\ge0,  F(a)\le0$ and 
$\ell\le u\le a$ on $\partial\Omega$. Then
$\ell\le u\le a$ throughout $\Omega$.
\end{lemma}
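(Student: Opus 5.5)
We prove $\ell\le u$ by an energy (weak maximum principle) argument applied to the positive part of the difference; the bound $u\le a$ follows by the symmetric argument. Set $w:=\ell-u$. Subtracting $F(u)=0$ from $F(\ell)\ge0$ gives
\begin{equation*}
\mathcal L w+\lambda\big(P(\ell)-P(u)\big)-\lambda w\ge0 \quad\text{in }\Omega .
\end{equation*}
Both $\ell$ and $u$ take values in $[0,1]$, so the mean value theorem gives $P(\ell)-P(u)=P'(\xi)\,w$ for some $\xi(x)\in[0,1]$. Hence $w$ satisfies the linear differential inequality
\begin{equation*}
\mathcal L w+c(x)\,w\ge0,\qquad c(x):=\lambda\big(P'(\xi(x))-1\big)\le \bar c:=\lambda\sup_{s\in[0,1]}(P'(s)-1)<\lambda_1 .
\end{equation*}
Since $\ell\le u$ on $\partial\Omega$ and both functions are continuous up to the boundary, $w^+$ vanishes on $\partial\Omega$. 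Together with $w\in H^1(\Omega)$, this gives $w^+\in H^1_0(\Omega)$; on a Lipschitz domain this follows from the standard fact that an $H^1\cap C(\overline\Omega)$ function that is nonpositive on the boundary has its positive part in $H^1_0$.

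Next we test the inequality against $w^+\ge0$ and integrate by parts. The interior regularity is only $C^2(\Omega)$, so we do this on smooth exhaustions $\Omega_k\uparrow\Omega$, or equivalently interpret the inequality weakly, and then pass to the limit. This yields
\begin{equation*}
\tfrac12\|\nabla w^+\|_2^2\le\int_\Omega c\,(w^+)^2\,\mathrm dx\le \bar c\,\|w^+\|_2^2 .
\end{equation*}
On the other hand, the variational characterization of the principal Dirichlet eigenvalue of $-\mathcal L=-\tfrac12\Delta$ gives $\tfrac12\|\nabla w^+\|_2^2\ge\lambda_1\|w^+\|_2^2$. Combining the two inequalities,
\begin{equation*}
(\lambda_1-\bar c)\,\|w^+\|_2^2\le0 .
\end{equation*}
Because $\bar c<\lambda_1$, this forces $w^+=0$, that is, $\ell\le u$ in $\Omega$. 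If $\bar c<0$, the conclusion is even more direct.

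The upper bound uses the same argument with $w:=u-a$. From $F(a)\le0$ and $F(u)=0$ we get $\mathcal L w+\lambda(P(u)-P(a))-\lambda w\ge0$. The mean value theorem again applies with an intermediate point in $[0,1]$, and the same energy estimate gives $u\le a$.

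We expect the main obstacle to be the justification of the integration by parts. The functions are only $H^1(\Omega)\cap C^2(\Omega)$, with no regularity at the boundary, so we need to show that
\begin{equation*}
-\int_\Omega \mathcal L w\,w^+\,\mathrm dx=\tfrac12\int_\Omega|\nabla w^+|^2\,\mathrm dx .
\end{equation*}
Our first approach is to use test functions $(w-\varepsilon)^+$ for $\varepsilon>0$. Continuity up to the boundary makes these compactly supported in $\Omega$, so the integration by parts is legitimate on a compact subdomain where $w$ is $C^2$. We then let $\varepsilon\downarrow0$ using monotone and dominated convergence. A secondary point is measurability of $c(x)$. We avoid it by taking the difference quotient $c=\lambda\big[(P(\ell)-P(u))/w-1\big]$ where $w\ne0$, which is continuous there and bounded by $\bar c$. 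Only its values on $\{w>0\}$ matter.
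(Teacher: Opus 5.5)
Your proposal is correct and follows the paper's proof essentially step for step. Both linearize the difference of residuals with a coefficient bounded by $\lambda\sup_{[0,1]}(P'-1)$, test against $w_+\in H^1_0(\Omega)$, conclude via the variational characterization of $\lambda_1$, and repeat the argument for $u\le a$. The paper writes the coefficient as $q(x)=\int_0^1P'(u+tw)\,\mathrm{d}t$, which is continuous and so makes your measurability workaround unnecessary, while your $(w-\varepsilon)^+$ truncation supplies the justification of the integration by parts that the paper leaves implicit.
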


\begin{proof}
We first prove $\ell\le u$. Set $w=\ell-u$. Since
$F(\ell)-F(u)\ge0$, subtracting the two PDE residuals gives $\mathcal Lw+\lambda\bigl(P(\ell)-P(u)-w\bigr)\ge0.$ To rewrite the nonlinear difference as a term proportional to $w$,
apply the fundamental theorem of calculus pointwise:
\begin{equation}
P(\ell(x))-P(u(x))
=\int_{u(x)}^{\ell(x)}P'(s)\,\mathrm{d}s
=w(x)\int_0^1P'\bigl(u(x)+t w(x)\bigr)\,\mathrm{d}t.
\end{equation}
Define the bounded coefficient $q(x)=\int_0^1P'\bigl(u(x)+t w(x)\bigr)\,\mathrm{d}t.$ Because $0\le u,\ell\le1$, every argument of $P'$ lies in $[0,1]$, and thus
\begin{equation}
\lambda(q(x)-1)
\le\lambda\sup_{s\in[0,1]}(P'(s)-1)<\lambda_1.
\end{equation}
It follows that $\frac12\Delta w+\lambda(q(x)-1)w\ge0,
\qquad w\le0\quad\text{on }\partial\Omega.$ We now justify the maximum-principle step rather than assuming it.
Let $w_+=\max\{w,0\}$. Since $w\le0$ on the boundary and
$w\in H^1(\Omega)$, we have $w_+\in H_0^1(\Omega)$. Multiplying the
last differential inequality by $w_+$ and integrating by parts gives
\begin{equation}
0\le-\frac12\int_\Omega|\nabla w_+|^2\,\mathrm{d}x
+\lambda\int_\Omega(q(x)-1)w_+^2\,\mathrm{d}x.
\end{equation}
Consequently,
\begin{equation}
\frac12\int_\Omega|\nabla w_+|^2\,\mathrm{d}x
\le c_*\int_\Omega w_+^2\,\mathrm{d}x,
\qquad c_*:=\lambda\sup_{s\in[0,1]}(P'(s)-1).
\end{equation}
The variational characterization of the first Dirichlet eigenvalue of
$-\frac12\Delta$ states that, for any $\varphi\in H_0^1(\Omega)$, $\frac12\int_\Omega|\nabla\varphi|^2\,\mathrm{d}x
\ge\lambda_1\int_\Omega\varphi^2\,\mathrm{d}x.$ Applying this inequality to $\varphi=w_+$ yields $(\lambda_1-c_*)\int_\Omega w_+^2\,\mathrm{d}x\le0.$
Since $\lambda_1-c_*>0$, we obtain $w_+=0$, hence $\ell\le u$.

To prove $u\le a$, repeat the same argument with $w=u-a$.
Indeed, $F(u)-F(a)\ge0$, $w\le0$ on the boundary, and $P(u)-P(a)=w\int_0^1P'\bigl(a+t(u-a)\bigr)\,\mathrm{d}t.$
The same upper bound on the resulting zero-order coefficient
and the same energy argument show $(u-a)_+=0$. Thus
$\ell\le u\le a$ in $\Omega$.
\end{proof}

\paragraph{Work-aware envelope optimization.}
Let $\mathcal Q\subset\Omega$ denote the evaluation region and let $T_{\ell,a}(x)$ and $V_{\ell,a}(x)$ be the expected runtime and estimator variance induced by a certified envelope pair $(\ell,a)$. For full-field evaluation, the ideal envelopes minimize the total cost of reaching a prescribed accuracy $\int_{\mathcal Q} T_{\ell,a}(x)V_{\ell,a}(x)\,\mathrm{d}\mu(x)$. Here $\mu$ describes the distribution of evaluation queries. The objective reflects the time--variance cost of independent Monte Carlo evaluation, but is difficult to optimize directly because $T_{\ell,a}$ depends on the entire transformed branching process.

Fortunately, the envelope width $d=a-\ell$ provides a tractable surrogate. From Eq.~\eqref{eq:envelope-wos}, the transformed reproduction coefficients are $C_j=P^{(j)}(\ell)d^j/j!$. In particular, for binary branching, the probability mass associated with reproduction is proportional to $d^2$. Moreover, Eq.~\eqref{eq:var-gap} bounds the estimator variance by $d^2/4$. These observations motivate minimizing the domain-averaged envelope width which replaces $T_{\ell ,a}(x)V_{\ell ,a}(x)$, yielding 
\begin{equation}
\begin{aligned}
\min_{\ell,a}\quad&
\int_{\mathcal Q} (a(x)-\ell(x))\,\mathrm{d}\mu(x),\\
\mathrm{s.t.}\quad&
F(\ell)\ge0,F(a)\le0,
0\le\ell\le a\le1,
\ell|_{\partial\Omega}\le g\le a|_{\partial\Omega}.
\end{aligned}
\label{eq:envelope-width}
\end{equation}

This objective seeks a single pair of reusable envelopes rather than independently optimizing bounds at each evaluation point. While width reduction does not guarantee a proportional runtime improvement, it simultaneously suppresses higher-order reproduction terms and tightens the estimator's variance bound.

\begin{figure}
    \centering
    \includegraphics[width=\linewidth]{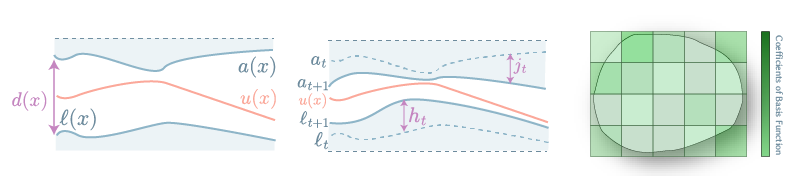}
    \caption{Starting from valid bounds $\ell(x)\leq u(x)\leq a(x)$, nonnegative corrections $h_t$ and $j_t$ monotonically tighten the lower and upper envelopes while preserving feasibility. Bernstein coefficients provide finite-dimensional sufficient conditions for continuous certification.}
    \label{fig:refinement}
\end{figure}

\paragraph{Monotone certified refinement.}
The optimization in Eq.~\eqref{eq:envelope-width} is nonlinear because the envelope constraints involve the reaction term $P$. We therefore begin with a feasible pair $(\ell_0,a_0)$ and construct a sequence of increasingly tight certified envelopes. At iteration $t$, we seek nonnegative corrections $h_t,j_t$ such that
\begin{equation}
\ell_{t+1}=\ell_t+h_t,\qquad
a_{t+1}=a_t-j_t.
\label{eq:envelope-refinement}
\end{equation}
The corresponding optimization maximizes the improvement
\begin{equation}
\begin{aligned}
\max_{h_t,j_t\ge0}\quad&
\int_{\mathcal Q}(h_t+j_t)\,\mathrm{d}\mu,\\
\mathrm{s.t.}\quad&
F(\ell_t+h_t)\ge0, \quad F(a_t-j_t)\le0,\quad \ell_t+h_t\le a_t-j_t,
\end{aligned}
\label{eq:envelope-refinement-opt}
\end{equation}
together with the original range and boundary constraints. Feasibility at every iteration implies
\begin{equation}
\ell_t\le\ell_{t+1}\le u\le a_{t+1}\le a_t,
\qquad
\mathcal J(\ell_{t+1},a_{t+1})
\le\mathcal J(\ell_t,a_t).
\label{eq:envelope-monotonicity}
\end{equation}
Thus, refinement never sacrifices a previously certified bound, and every accepted update improves the surrogate objective (Figure~\ref{fig:refinement}).

For a general offspring generating function $P(u)=\sum_{k=0}^{K}p_ku^k$, the nonlinear certification constraints can be converted into affine sufficient conditions by exploiting the convexity of $P$ on $[0,1]$. Substituting the monotone updates yields
\begin{equation}
\begin{aligned}
F(\ell_t+h_t)
&=F(\ell_t)+[\mathcal L+\lambda(P'(\ell_t)-1)]h_t
+\lambda R_{\ell}(h_t),\\
-F(a_t-j_t)
&=-F(a_t)+[\mathcal L+\lambda(P'(a_t)-1)]j_t
-\lambda R_a(j_t),
\end{aligned}
\end{equation}
where the Taylor remainders satisfy
\begin{equation}
\begin{aligned}
R_{\ell}(h_t)
&=\int_0^{h_t}(h_t-s)P''(\ell_t+s)\,\mathrm{d}s\ge0,\\
R_a(j_t)
&=\int_0^{j_t}(j_t-s)P''(a_t-s)\,\mathrm{d}s
\le\frac{M_t}{2}j_t^2,
\end{aligned}
\end{equation}
with $M_t(x)=\sup_{s\in[\ell_t(x),a_t(x)]}P''(s)$. Imposing a trust region $0\le j_t\le\delta$ further gives $R_a(j_t)\le M_t\delta j_t/2$. Therefore, the original nonlinear certification constraints are guaranteed by
\begin{equation}
\begin{aligned}
F(\ell_t)+[\mathcal L+\lambda(P'(\ell_t)-1)]h_t&\ge0,\\
-F(a_t)+[\mathcal L+\lambda(P'(a_t)-1-\tfrac12M_t\delta)]j_t&\ge0.
\end{aligned}
\label{eq:envelope-affine}
\end{equation}
Since $\ell_t$, $a_t$, and $M_t$ are fixed during each iteration, these conditions are affine in $h_t$ and $j_t$. Together with the linear width-reduction objective and monotonicity constraints, they yield a linear program after finite-dimensional representation and continuous certification. Thus, the refinement extends to general offspring distributions while preserving the validity and nesting of the certified envelopes.

% \paragraph{Finite-dimensional realization.}  We represent both envelopes using cubic B-splines, $\ell(x)=\sum_i c_i^\ell\phi_i(x)$ and $a(x)=\sum_i c_i^a\phi_i(x)$, where $\phi_i$ are locally supported basis functions and the spline coefficients are the optimization variables. At each refinement step, we optimize the coefficient increments to maximize the reduction in domain-averaged envelope width, subject to the affine PDE constraints, envelope ordering, and boundary conditions derived above. Since each B-spline is polynomial within a mesh cell, these constraints can be expressed in the local Bernstein basis, whose nonnegative coefficients guarantee that the corresponding inequalities hold everywhere inside that cell. Enforcing these coefficient constraints over all cells covering $\Omega$, together with the boundary constraints, therefore certifies the envelopes throughout the domain rather than only at discrete sample points. Since Bernstein coefficients depend linearly on the spline coefficients, the resulting optimization remains a linear program. The certified envelopes are constructed once and reused across full-field evaluations, amortizing preprocessing over all queries.

\paragraph{Finite-dimensional certified optimization.}
We represent the envelopes using cubic B-splines,
$\ell(x)=\sum_i c_i^\ell\phi_i(x)$ and
$a(x)=\sum_i c_i^a\phi_i(x)$, with their coefficients as optimization variables. Although Eq.~\eqref{eq:envelope-affine} is affine in the corrections, it imposes infinitely many spatial constraints. The following theorem provides a finite-dimensional certification. On each spline cell, we express the affine residuals and ordering constraints in the Bernstein basis, enforcing nonnegativity of their coefficients; range and polynomial boundary constraints are handled analogously. These finitely many sufficient linear inequalities certify the PDE constraints throughout the domain while preserving monotone envelope refinement.

For nonrectangular domains, we may restrict these certificates to a decomposition of each cell intersected with $\Omega$, using tensor-product or triangular Bernstein bases on the resulting pieces. This changes the sufficient finite constraints, while retaining the same globally $C^2$ spline representation and continuous PDE inequalities on $\Omega$. Therefore, the resulting refinement is a linear program with continuous, rather than pointwise, certification. The Bernstein constraints yields a conservative feasible set. The envelopes are constructed once and reused across full-field evaluations, amortizing their preprocessing cost over all queries.

\subsection{Efficient Envelope-Aware Sampling}

Although the envelope transformation reduces branching complexity, sampling its spatially varying transition weights introduces additional overhead. We employ three complementary optimizations to reduce this cost while preserving the exact transition law.

\paragraph{Adaptive screening.} The screened-kernel representation allows an adjustable parameter $\mu\in[0,\lambda]$. Using Eq.~\eqref{eq}, the interior weight $W_\mu =\lambda \omega_I-(\lambda-\mu)d$, yielding 
\begin{equation}
\begin{aligned}
W_\mu =r_\ell+r_a+\lambda\sum_{j=2}^{K}C_j
+[\mu-\lambda+\lambda P'(\ell)]d.
\end{aligned}
\end{equation}
This decomposition shows that changing $\mu$ affects only the single-offspring continuation weight, leaving termination and genuine reproduction weights unchanged. Their nonnegativity is preserved by choosing $\mu_B=\lambda\max\{0,1-P'(\underline\ell_B)\}$, where $\underline\ell_B$ is a certified lower bound over the sphere. A smaller $\mu_B$ increases the sphere-exit probability, potentially reducing expensive interior proposals without altering the represented solution.

\paragraph{Spatially adaptive majorants.}
Instead of using global rejection bounds, we independently bound the exit and interior weights by $M_E\ge\sup_{\partial B}d$ and $M_I\ge\sup_B W_\mu$. Their proposal masses are $q_\mu M_E$ and $m_\mu M_I$, where $q_\mu=\int_{\partial B}H_\mu$ and $m_\mu=\int_B G_\mu$. The corresponding acceptance probabilities are $d(y)/M_E$ and $W_\mu(y)/M_I$. We obtain tight local bounds from the spline cells intersecting each sphere and accelerate their aggregation using range-minimum/maximum queries (RMQ), reducing rejection without changing the accepted distribution.

\paragraph{Lazy evaluation.}
Even with tighter proposals, computing the exact envelope weights for every candidate is unnecessary. We first compare the rejection threshold against inexpensive local lower and upper bounds: candidates are accepted or rejected immediately when these bounds determine the outcome, and full spline evaluation is performed only otherwise. Green-kernel sampling is deferred until an interior proposal is selected, while reusable kernel quantities are cached. Thus, most inexpensive decisions avoid the cost of exact weight evaluation, and the savings accumulate across repeated transitions in the branching tree.

\begin{figure}
    \centering
    \includegraphics[width=\linewidth]{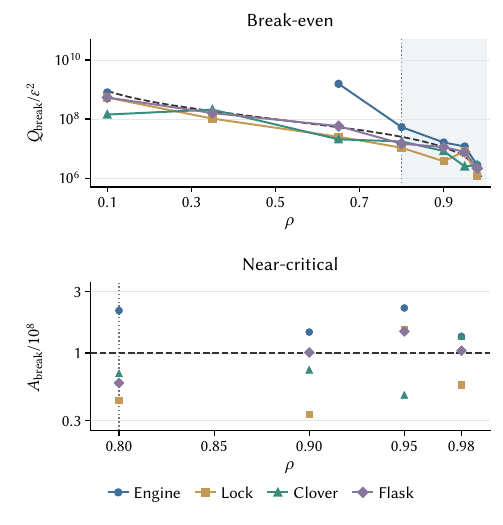}
    \caption{Empirical validation of the pilot-free spectral-workload fallback on four benchmark scenes. Top: the break-even workload $Q_{\mathrm{break}}/\varepsilon^2$ at $\varepsilon=0.003$ compared with the proposed boundary $A(1-\rho)/\rho$ (dark dashed curve), where $A=10^8$. At $\rho=0.35$ in Engine, WiE has higher equal-accuracy sampling cost than BBM, so no finite query count can amortize its preprocessing under the measured cost model. Bottom: unconnected points show the normalized scale $A_{\mathrm{break}}/A$, where $A_{\mathrm{break}}=(Q_{\mathrm{break}}/\varepsilon^2)\rho/(1-\rho)$, across five target accuracies. The horizontal dashed line marks $A_{\mathrm{break}}/A=1$. Their near-critical concentration around one motivates the empirical rule $\rho\ge0.8$ and $\mathcal W\ge10^8$. Vertical dotted lines mark $\rho=0.8$; the shaded region in the top panel indicates $\rho\ge0.8$. Costs combine recorded preprocessing with equal-RMSE sampling estimates.}
    \label{fig:fallback}
\end{figure}

\paragraph{Spectral-workload fallback.} Envelope construction incurs a one-time cost that may outweigh its sampling benefits. Motivated by the near-critical BBM work scaling $(1-\rho)^{-1}$, where $\rho=\lambda(P'(1)-1)/\lambda_1<1$, we define a spectral-workload indicator for $Q$ queries at target RMSE $\varepsilon$:
\begin{equation}
\mathcal W=\frac{Q}{\varepsilon^2}\frac{\rho}{1-\rho}.
\label{eq:spectral-workload}
\end{equation}
We activate WiE only when $\rho\ge\rho_0$ and $\mathcal W\ge A$, with $\rho_0=0.8$ and $A=10^8$ calibrated empirically across four scenes (Figure~\ref{fig:fallback}). This pilot-free criterion avoids unnecessary preprocessing, while imperfect fallback decisions affect efficiency but not correctness.

\section{Experiments}
\label{sec:experiments}
Our proposed WiE algorithm is evaluated in four non-trivial 2D scenes against canonical branching WoS. The results are illustrated in Figure~\ref{fig:2D_test} and the its speedup effects are evaluated with different branching rates. In Table~\ref{tab:wie-branch-rate}, it is clear that WiE naturally appears insignificant with low splitting rate and thus the extra processing cost overwhelms its variance advantage per sample; while its merit become explicit as $\beta /\lambda_1$ grows, it strongly suppresses the splitting rate in simulation by being enveloped, which is consistent across all scenarios. Besides, WiE also achieves the same accuracy using much less time than vanilla branching WoS in high splitting scenes as in Table~\ref{tab:wie-rmse-scaling}. 

\begin{figure}[tbp]
\centering
\includegraphics[width=\columnwidth]{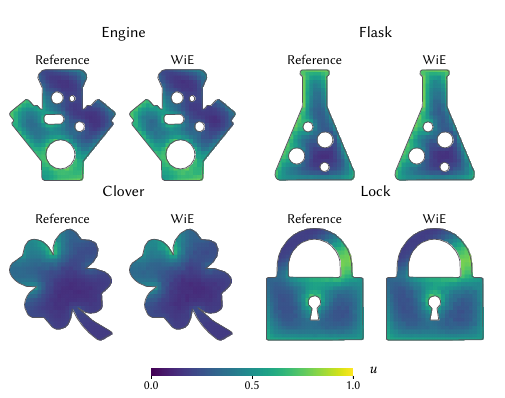}
\caption{Two-dimensional fields in four domains. Reference and WiE use 1,000 and 600 samples per query, respectively. The reference is itself a Monte Carlo estimate; these are not equal-time results. Geometry and queries share the same rigid $180^\circ$ rotation, and all fields use the same color range. Nearest-query display preserves the original values.}
\Description{Upright Engine, Flask, Clover, and Lock fields, each with reference and WiE side by side.}
\label{fig:2D_test}
\end{figure}

\begin{table}[t]
\centering
\caption{Efficiency across branching intensities. Single-sample variances are averaged over query points; the BBM and WiE work columns report $10^3\sum_i t_i\operatorname{Var}[Z_i]$ (lower is better), where $t_i$ is the CPU time measured at 256 samples per query. Speedup is the ratio of these work measures. All results average three seeds; envelope preprocessing is excluded.}
\label{tab:wie-branch-rate}
\scriptsize
\setlength{\tabcolsep}{2.8pt}
\renewcommand{\arraystretch}{1.08}
\resizebox{\columnwidth}{!}{%
\begin{tabular}{lc cc cc c}
\toprule
 & & \multicolumn{2}{c}{Single-sample variance} & \multicolumn{2}{c}{$10^3$ time$\times$variance} & \\
\cmidrule(lr){3-4}\cmidrule(lr){5-6}
Scene & $\beta/\widehat\lambda_1$ & BBM & WiE & BBM & WiE & Speedup \\
\midrule
\multirow{5}{*}{Engine} & 0.10 & 0.0066 & 0.0042 & 1.48 & 1.25 & 1.18$\times$ \\
 & 0.85 & 0.0262 & 0.0322 & 28.60 & 12.06 & 2.37$\times$ \\
 & 0.90 & 0.0271 & 0.0332 & 41.26 & 12.89 & 3.20$\times$ \\
 & 0.95 & 0.0283 & 0.0342 & 71.29 & 13.00 & 5.48$\times$ \\
 & 0.98 & 0.0287 & 0.0348 & 113.36 & 13.40 & \textbf{8.46$\times$} \\
\midrule
\multirow{5}{*}{Lock} & 0.10 & 0.0101 & 0.0068 & 3.02 & 2.82 & 1.07$\times$ \\
 & 0.85 & 0.0362 & 0.0337 & 71.68 & 16.67 & 4.30$\times$ \\
 & 0.90 & 0.0373 & 0.0350 & 102.66 & 18.02 & 5.70$\times$ \\
 & 0.95 & 0.0383 & 0.0358 & 187.15 & 19.12 & 9.79$\times$ \\
 & 0.98 & 0.0389 & 0.0360 & 294.56 & 18.96 & \textbf{15.54$\times$} \\
\midrule
\multirow{5}{*}{Clover} & 0.10 & 0.0158 & 0.0065 & 5.35 & 2.72 & 1.97$\times$ \\
 & 0.85 & 0.0257 & 0.0205 & 51.36 & 8.87 & 5.79$\times$ \\
 & 0.90 & 0.0260 & 0.0211 & 73.38 & 9.58 & 7.66$\times$ \\
 & 0.95 & 0.0262 & 0.0218 & 118.56 & 9.27 & 12.79$\times$ \\
 & 0.98 & 0.0264 & 0.0222 & 156.26 & 9.66 & \textbf{16.18$\times$} \\
\midrule
\multirow{5}{*}{Flask} & 0.10 & 0.0157 & 0.0106 & 2.78 & 2.47 & 1.13$\times$ \\
 & 0.85 & 0.0379 & 0.0429 & 27.80 & 12.39 & 2.24$\times$ \\
 & 0.90 & 0.0391 & 0.0447 & 37.54 & 12.93 & 2.90$\times$ \\
 & 0.95 & 0.0402 & 0.0405 & 63.66 & 12.15 & 5.24$\times$ \\
 & 0.98 & 0.0409 & 0.0431 & 87.08 & 12.58 & \textbf{6.92$\times$} \\
\bottomrule
\end{tabular}}
\end{table}

\begin{table}[t]
\centering
\caption{Equal-accuracy cost under a fixed near-critical branching intensity $\rho=\beta/\widehat\lambda_1=0.99$. We vary the target stochastic RMSE $\varepsilon$ over one order of magnitude; $N$ is the number of independent trees per query. CPU time (seconds) is summed over 48 fixed queries and averaged across three seeds, excluding envelope preprocessing. Sample budgets are variance-calibrated, and speedup is the ratio of measured CPU costs.}
\label{tab:wie-rmse-scaling}
\scriptsize
\setlength{\tabcolsep}{3pt}
\renewcommand{\arraystretch}{1.10}
\resizebox{\columnwidth}{!}{%
\begin{tabular}{lc rr rr c}
\toprule
& Target & \multicolumn{2}{c}{Samples/query} & \multicolumn{2}{c}{CPU (s)} & \\
\cmidrule(lr){3-4}\cmidrule(lr){5-6}
Scene & RMSE & BBM & WiE & BBM & WiE & Speedup \\
\midrule
\multirow{4}{*}{Engine} & 0.020 & 76 & 101 & 0.099 & 0.021 & 4.74$\times$ \\
 & 0.010 & 304 & 402 & 0.414 & 0.088 & 4.69$\times$ \\
 & 0.005 & 1,216 & 1,608 & 1.624 & 0.324 & 5.01$\times$ \\
 & 0.002 & 7,600 & 10,050 & 9.901 & 1.977 & 5.01$\times$ \\
\midrule
\multirow{4}{*}{Lock} & 0.020 & 96 & 85 & 0.248 & 0.020 & 12.44$\times$ \\
 & 0.010 & 382 & 337 & 0.992 & 0.077 & 12.91$\times$ \\
 & 0.005 & 1,528 & 1,348 & 4.185 & 0.298 & 14.04$\times$ \\
 & 0.002 & 9,550 & 8,425 & 24.628 & 1.861 & 13.23$\times$ \\
\midrule
\multirow{4}{*}{Clover} & 0.020 & 66 & 50 & 0.134 & 0.010 & 13.16$\times$ \\
 & 0.010 & 261 & 200 & 0.532 & 0.043 & 12.26$\times$ \\
 & 0.005 & 1,044 & 800 & 2.129 & 0.146 & 14.61$\times$ \\
 & 0.002 & 6,525 & 5,000 & 13.151 & 0.961 & 13.69$\times$ \\
\midrule
\multirow{4}{*}{Flask} & 0.020 & 107 & 112 & 0.143 & 0.026 & 5.50$\times$ \\
 & 0.010 & 427 & 447 & 0.573 & 0.096 & 5.97$\times$ \\
 & 0.005 & 1,708 & 1,788 & 2.133 & 0.382 & 5.58$\times$ \\
 & 0.002 & 10,675 & 11,175 & 13.314 & 2.399 & 5.55$\times$ \\
\bottomrule
\end{tabular}}
\end{table}

\paragraph{Envelope construction ablation.}
We compare three certified constructions with the same $6\times6$ bicubic basis: \emph{sequential LP} optimizes the upper and then lower envelope separately, \emph{joint LP} minimizes their width together, and \emph{monotone refinement} further tightens the joint solution by certified nonlinear iterations. The archived three-update configuration in Figure~\ref{fig:construction-ablation} shows that joint optimization improves width over sequential fitting, while refinement additionally reduces it by $5.5\%$ on Engine and $7.2\%$ on Flask without increasing spline resolution. Gains on Lock and Clover are negligible, and tighter envelopes do not uniformly imply lower sampling work. Thus, iteration can improve a fixed representation, but is not a universal substitute for increasing its resolution.
\begin{figure}[t]
    \centering
    \includegraphics[width=\linewidth]{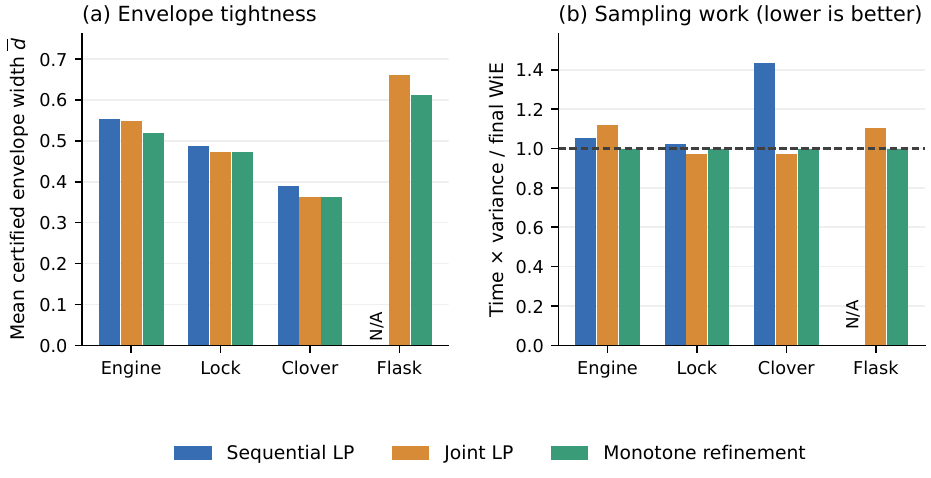}
    \caption{Construction ablation at $\rho=0.98$: (a) mean certified envelope width and (b) sampling time--variance cost relative to refined WiE. All methods use the same $6\times6$ spline basis. The unavailable Flask sequential certificate is marked N/A.}
    \label{fig:construction-ablation}
\end{figure}

\paragraph{Spline resolution.}
Table~\ref{tab:wie-spline-resolution-iterations}(a) examines spatial resolution and geometric certification. The earlier implementation certifies the entire bounding square, including regions outside the PDE domain and inside holes. We instead clip certificate cells to the original polygon: intact or axis-aligned pieces use tensor-product Bernstein coefficients, and oblique pieces use triangular Bernstein coefficients. The envelopes remain globally $C^2$ cubic B-splines. We keep the same boundary data, offspring laws, and $\rho=\lambda(P'(1)-1)/\widehat\lambda_1=0.98$. For all six non-square domains, we transfer the same $k=6$ joint initialization exactly to $k=12$ and $24$ before refinement, so the resolution comparison does not change the initial functions. Both coefficient vectors are restricted to $[-2,2]$ to suppress large, weakly constrained exterior modes. This extension constraint is additional to the certified range and PDE conditions inside $\Omega$.

\paragraph{Refinement iterations.}
We allow up to 12 updates, retaining $d\ge0.003$ and the upper-correction limit $0.075$. We stop after two successive width reductions below $10^{-6}$, or return the last valid pair when the LP budget or numerical checks fail; the actual accepted count is reported. If the full polynomial LP accepts no update, we retry from the same initialization using conservative cellwise scalar bounds on the residuals and $P'$; both attempts are included in construction cost. Table~\ref{tab:wie-spline-refinement-detail}(b) tests the returned pairs with fixed queries, sample count, seeds, and rejection-bound grid. At $k=24$, the six non-square cases reduce mean width by 53--87\% relative to the earlier coarse result. Relative to their common initialization, their sampling time--variance efficiency improves by $2.06$--$23.13\times$, while additional construction costs $298$--$600$ CPU seconds. Thus finer envelopes improve sampling in these tests, but need sufficient reuse to offset preprocessing. The L-shaped case attains a smaller mean width at $k=12$ than at $k=24$; the latter uses the conservative fallback, and these finite-budget results do not imply monotonic improvement with resolution. These results diagnose envelope construction rather than changing the variable transformation underlying WiE. All cases use the general-PGF kernel; Figure~\ref{fig:construction-ablation} retains the earlier specialized binary implementation.

\begin{table}[t]
\centering
\caption{Near-critical spline construction at $\rho=0.98$. (a) Mean width $\bar d$ on the fixed construction queries. \emph{Box} is the earlier full-box $k=6$ refinement; the other columns use certificates restricted to the geometric domain. All resolutions start from exactly the same $k=6$ joint envelope, transferred by knot insertion, with $(k+3)^2$ coefficients per side. We sweep $k=6,12,24$ on all six non-square domains; the three square cases remain $k=6$ controls. Refinement allows at most 12 updates, with a two-update width-plateau threshold of $10^{-6}$. A dagger indicates termination by the LP budget or numerical failure, rather than convergence. $j_*$ counts accepted updates at the resolution used in (b). At $k=24$, the scalar fallback is used for L-shaped, Two holes.}
\label{tab:wie-spline-resolution-iterations}
\small
\setlength{\tabcolsep}{3pt}
\renewcommand{\arraystretch}{1.05}
\begin{tabular}{lrrrrr}
\toprule
\multicolumn{6}{l}{(a) Resolution and geometric certification}\\
Case & Box & $k=6$ & $k=12$ & $k=24$ & $j_*$\\
\midrule

Engine & 0.5180 & 0.4095 & 0.2212$^\dagger$ & 0.1452$^\dagger$ & 2\\
Lock & 0.4574 & 0.3167$^\dagger$ & 0.1716$^\dagger$ & 0.0916$^\dagger$ & 3\\
Clover & 0.3621 & 0.2803 & 0.2019$^\dagger$ & 0.1543$^\dagger$ & 2\\
Flask & 0.6124 & 0.4103 & 0.1818$^\dagger$ & 0.0777$^\dagger$ & 6\\
L-shaped & 0.0517 & 0.0421 & 0.0081 & 0.0104 & 7\\
Two holes & 0.3819 & 0.3712$^\dagger$ & 0.2717 & 0.1795$^\dagger$ & 4\\
Square ($u^2$) & 0.0075 & 0.0075 & -- & -- & 9\\
Square ($u^4$) & 0.0062 & 0.0062$^\dagger$ & -- & -- & 4\\
Square ($p_0>0$) & 0.0057 & 0.0057$^\dagger$ & -- & -- & 4\\
\bottomrule
\end{tabular}
\end{table}

\begin{table}[t]
\ContinuedFloat
\centering
\caption{(b) Sampling with the common initialization ($j=0$) and the returned envelope. Every configuration uses the same 48 queries per case, $N=1024$ samples per query, five seeds, one general-PGF kernel, and a fixed $96\times96$ rejection-bound grid. $t_r$ is additional construction CPU time beyond the shared initialization, including unsuccessful LP attempts and any certified scalar-bound fallback; $t_s$ is sampling CPU time, both in seconds. $E$ is reference RMSE. $S_0=C_0/C$, with $C=\sum_q(t_q/N)\widehat V_q$, measures sampling time--variance efficiency and excludes construction. It is not an end-to-end speedup. References are independent 65,536-sample BBM estimates for the first four cases and an $801^2$ finite-difference solution, checked at $401^2$, for the remaining cases.}
\label{tab:wie-spline-refinement-detail}
\small
\setlength{\tabcolsep}{2.3pt}
\renewcommand{\arraystretch}{1.02}
\begin{tabular}{lrrrrrr}
\toprule
\multicolumn{7}{l}{(b) Initialization and refined sampling}\\
Case & $k$ & $j$ & $t_r$ & $t_s$ & $10^3E$ & $S_0$\\
\midrule

Engine & 6 & 0 & 0 & 0.614 & 5.745 & 1.00\\
 & 24 & 2$^\dagger$ & 365.9 & 0.635 & 2.204 & 9.41\\
\addlinespace[2pt]
Lock & 6 & 0 & 0 & 0.507 & 5.868 & 1.00\\
 & 24 & 3$^\dagger$ & 383.6 & 0.870 & 1.785 & 11.60\\
\addlinespace[2pt]
Clover & 6 & 0 & 0 & 0.484 & 4.496 & 1.00\\
 & 24 & 2$^\dagger$ & 324.4 & 0.931 & 2.650 & 2.06\\
\addlinespace[2pt]
Flask & 6 & 0 & 0 & 0.439 & 7.541 & 1.00\\
 & 24 & 6$^\dagger$ & 600.1 & 0.683 & 1.462 & 23.13\\
\addlinespace[2pt]
L-shaped & 6 & 0 & 0 & 0.539 & 0.775 & 1.00\\
 & 24 & 7 & 332.6 & 0.901 & 0.140 & 13.18\\
\addlinespace[2pt]
Two holes & 6 & 0 & 0 & 1.051 & 5.633 & 1.00\\
 & 24 & 4$^\dagger$ & 298.0 & 0.994 & 2.815 & 6.35\\
\addlinespace[2pt]
Square ($u^2$) & 6 & 0 & 0 & 0.649 & 0.783 & 1.00\\
 & 6 & 9 & 2.6 & 0.777 & 0.116 & 41.56\\
\addlinespace[2pt]
Square ($u^4$) & 6 & 0 & 0 & 0.685 & 0.450 & 1.00\\
 & 6 & 4$^\dagger$ & 5.9 & 0.812 & 0.094 & 22.58\\
\addlinespace[2pt]
Square ($p_0>0$) & 6 & 0 & 0 & 0.649 & 0.423 & 1.00\\
 & 6 & 4$^\dagger$ & 3.9 & 0.830 & 0.089 & 24.62\\
\addlinespace[2pt]
\bottomrule
\end{tabular}
\end{table}

\paragraph{Sampling optimization ablation.}
Figure~\ref{fig:sampling-ablation} isolates Bernstein squeeze and RMQ using the same certified envelopes and random paths ($\rho=0.98$). Squeeze reduces full spline evaluations from $14$--$17$ to $2.5$--$4.7$ per root sample; disabling it increases CPU time by $6.9$--$15.6\%$. Disabling both optimizations increases CPU by $9$--$16\%$. RMQ alone offers limited and sometimes negative gains for these small auxiliary grids. Crucially, these optimizations leave the sampled estimator unchanged; the improvements arise from cheaper decisions rather than a different probability law.
\begin{figure}[t]
    \centering
    \includegraphics[width=\linewidth]{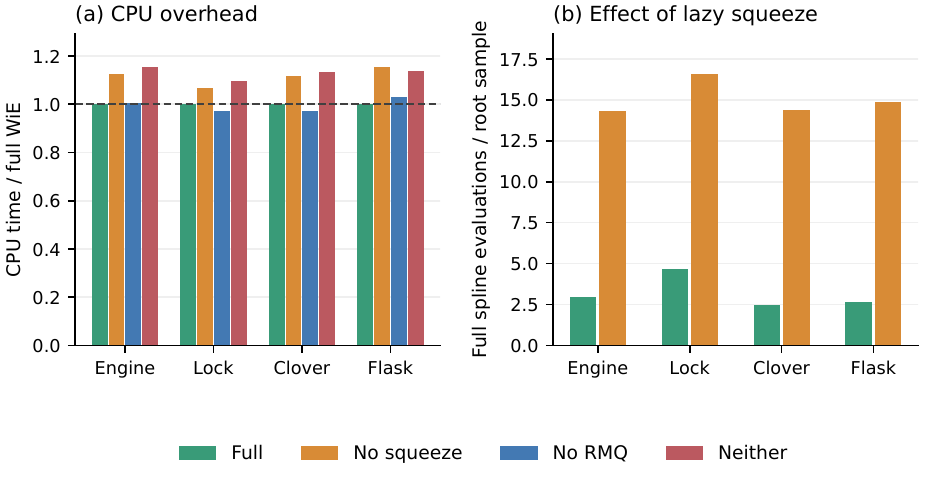}
    \caption{Engineering ablation on four scenes at $\rho=0.98$: (a) CPU time relative to full WiE, and (b) full spline evaluations per root sample with and without lazy squeeze. Each configuration uses identical seeds and $512$ samples per query.}
    \label{fig:sampling-ablation}
\end{figure}

\paragraph{General offspring distributions.}
To validate the polynomial PGF formulation beyond binary branching, we evaluate an independent general-offspring WoS implementation with certified local $3\times3$ B-spline envelopes on five two-dimensional problems, including pure quartic, mixed, and zero-offspring reactions (Table~\ref{tab:general-pgf}). WiE improves time--variance efficiency across these cases, although the benefit depends strongly on geometry, boundary values, and the attainable envelope width. The table provides a generality test rather than a controlled comparison of offspring degree: reaction rates and domains vary between cases.
\begin{table}[t]
\centering
\caption{General PGF validation with $700$ samples per query. RMSE is measured against independent reference fields. Speedup is the BBM/WiE time--variance cost ratio (excluding local envelope construction). The local-spline experiment uses a separate general-PGF solver without multi-round refinement.}
\label{tab:general-pgf}
\scriptsize
\setlength{\tabcolsep}{3pt}
\renewcommand{\arraystretch}{1.08}
\resizebox{\columnwidth}{!}{%
\begin{tabular}{llrrr}
\toprule
Case & $P(u)$ & BBM RMSE & WiE RMSE & Speedup \\
\midrule
Square (binary)  & $u^2$                                  & 0.00355 & 0.00107 & 29.91$\times$ \\
Square (quartic) & $u^4$                                  & 0.00340 & 0.00090 & 60.04$\times$ \\
L-shaped         & $0.25u^2+0.45u^3+0.30u^5$             & 0.00287 & 0.00139 &  5.43$\times$ \\
Two holes        & $0.10u+0.45u^2+0.45u^4$               & 0.00969 & 0.01039 &  1.27$\times$ \\
Square ($p_0>0$) & $0.10+0.55u^2+0.35u^3$                & 0.00496 & 0.00338 &  2.39$\times$ \\
\bottomrule
\end{tabular}}
\end{table}

\paragraph{Full-tree criticality in a ball.}
Figure~\ref{fig:wiefig-criticality} illustrates the first-moment change for binary branching in the three-dimensional unit ball, with $g\equiv0.75$ and the constant envelope $[0,0.75]$.

\begin{figure}[tbp]
\centering
\includegraphics[width=\columnwidth]{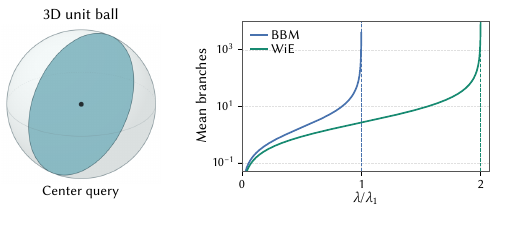}
\caption{The full-tree critical rate moves from $\lambda_1$ to $2\lambda_1$. Left: the complete unit ball, a central section, and its center query. Right: exact expected reproduction counts, with $\rho=\lambda/\lambda_1$ and $\lambda_1=\pi^2/2$. BBM gives $k/\sin k-1$, $k=\pi\sqrt\rho$; WiE gives $1.5(k_w/\sin k_w-1)$, $k_w=\pi\sqrt{\rho/2}$. Each expression is plotted only below its own critical rate.}
\Description{A complete translucent unit ball and analytical expected branching counts for BBM and constant-envelope WiE.}
\label{fig:wiefig-criticality}
\end{figure}

\paragraph{Radial validation.}
Figure~\ref{fig:wiefig-radial} compares WiE with an independent radial boundary-value solve on the three-dimensional unit ball.

\begin{figure}[tbp]
\centering
\includegraphics[width=\columnwidth]{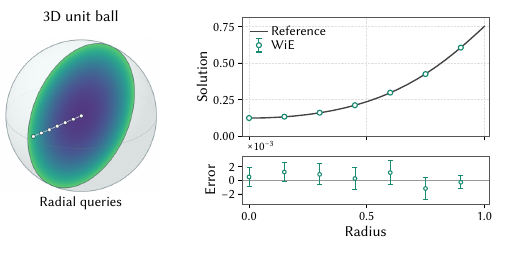}
\caption{Radial validation at $\lambda=2.1\lambda_1$ with boundary value $0.75$. The full sphere encloses a reference section and seven query locations. The upper plot compares WiE with an independent radial BVP; the lower plot resolves the signed estimation error. Points pool three seeds of 8,192 samples; bars show 95\% Monte Carlo confidence intervals. This test uses certified radial envelopes.}
\Description{A full sphere with seven radial queries beside the solution profile and signed estimation errors with confidence intervals.}
\label{fig:wiefig-radial}
\end{figure}

\paragraph{Additional three-dimensional tests.}
The following mesh tests solve $\frac12\Delta u+\lambda(u^2-u)=0$ in the volumes enclosed by Fandisk, Bunny, Spot, and Rocker-arm~\cite{jacobsonTestModels,stanfordScanning,fcpwGeometry,crane2013robust}, with $g(x)=h(x)=0.9-0.7(z_{\max}-z)/(z_{\max}-z_{\min})$, analytic affine envelopes $(\ell,a)=(0,h)$, $\lambda=2.1\widehat\lambda_1$, and boundary tolerance $10^{-6}$.
Here $\widehat\lambda_1$ is a finite-difference spectral estimate; these tests use the affine certificate without spline optimization or the preprocessing fallback.
The reference grids are $128^3$ for Fandisk, Bunny, and Spot, and $256^3$ for Rocker-arm; the last refinement changes the 48 held-out query values by RMS 0.00036, 0.00021, 0.00037, and 0.00045, respectively.
For the measured error--time comparisons, BBM samples boundary payoffs as Bernoulli outcomes and both implementations stop a root once a zero factor is known; these measurements do not evaluate full-tree work.
Figure~\ref{fig:wiefig-time-error} uses one pinned core of an Intel Xeon Silver 4310; total wall time includes mesh and BVH setup and sampling, plus analytic-envelope construction for WiE, while query generation, reference solves, rendering, and spectral analysis are excluded for both methods.

\begin{figure}[tbp]
\centering
\includegraphics[width=\columnwidth]{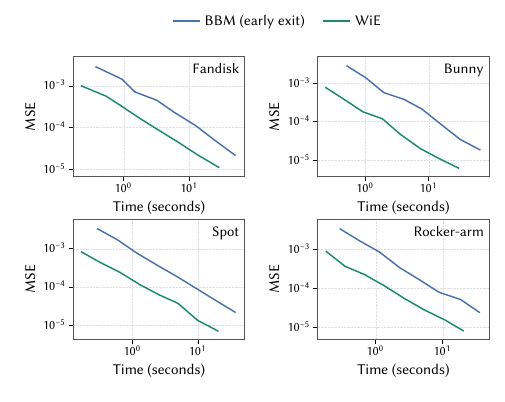}
\caption{Error versus computation time at $\lambda=2.1\widehat\lambda_1$. Each curve joins eight sample budgets $N=2^6,\ldots,2^{13}$; each point averages the MSE of three independent seeds over 48 held-out volume queries. Time includes mesh setup and WiE envelope construction. Both implementations use early exit.}
\Description{Four measured time versus mean squared error plots for Fandisk, Bunny, Spot, and Rocker-arm.}
\label{fig:wiefig-time-error}
\end{figure}

\paragraph{Envelope quality and evaluation cost.}
We further compare analytic affine envelopes with the final output of monotone B-spline construction on six meshes (Table~\ref{tab:wie-envelope-cost}), using the same PDE, height-dependent boundary data, $\lambda=2.1\widehat\lambda_1$, and boundary tolerance as above. The affine pair is $(\ell,a)=(0,h)$. Cubic tensor-product splines use $9^3$ coefficients per envelope, start from $(0,0.9)$, and allow up to eight refinement rounds, with a 60-second limit per LP. The 2,048 training points are independent of the 48 evaluation queries. Both representations use the same product estimator and exact termination after a zero factor.

At fixed $N=1{,}024$, the five valid final spline configurations reduce the measured single-sample variance by $5.4$--$21.0\%$, but require $1.63$--$4.05\times$ the sampling CPU time of the affine baseline, even before construction costs. Representing the identical affine envelopes as splines also increases sampling time by $1.17$--$1.57\times$, isolating representation and evaluation overhead. Fandisk and Spot accept no refinement and retain the initial envelope; the final Armadillo configuration requires event-probability repairs and is excluded from valid performance comparisons.

These results distinguish the variance bound from realized efficiency. The final spline envelopes have larger mean widths than the affine baseline; on Bunny, for example, the means are $0.582$ and $0.494$, respectively, despite the lower measured spline variance. Equation~\eqref{eq:var-gap} remains a valid worst-case bound, but when attainable envelopes remain broad, it can provide limited guidance for predicting actual variance or selecting the fastest representation. Lower measured variance alone is also insufficient: its benefit must offset the cost of sampling and envelope construction. For these tests, inexpensive analytic affine envelopes offer the better computational tradeoff while retaining WiE's envelope transformation.

\begin{table}[t]
\centering
\caption{Envelope comparison at a fixed sample budget. $V$ is the mean single-sample variance over 48 queries and eight independent runs; $t$ is the mean sampling CPU time in seconds for all 48 queries, with $N=1{,}024$ samples per query. Construction is excluded. A: analytic affine; S: final spline-construction output. $\dagger$: no accepted update, so S is the initial $(0,0.9)$ envelope. Dashes denote the failed Armadillo sampling audit.}
\label{tab:wie-envelope-cost}
\small
\setlength{\tabcolsep}{4pt}
\renewcommand{\arraystretch}{1.08}
\begin{tabular}{lrrrr}
\toprule
Model & $V_{\mathrm A}$ & $V_{\mathrm S}$ & $t_{\mathrm A}$ & $t_{\mathrm S}$ \\
\midrule
Octopus & 0.02514 & 0.02082 & 5.061 & 9.472 \\
Armadillo & 0.05944 & --- & 4.917 & --- \\
Fandisk$^{\dagger}$ & 0.07618 & 0.07167 & 3.183 & 10.590 \\
Bunny & 0.04761 & 0.03763 & 3.344 & 5.905 \\
Spot$^{\dagger}$ & 0.06105 & 0.05749 & 2.333 & 9.448 \\
Rocker-arm & 0.06159 & 0.05829 & 2.451 & 4.004 \\
\bottomrule
\end{tabular}
\end{table}

\paragraph{Interior field visualizations.}
Figure~\ref{fig:wiefig-spot} shows the complete Spot model and a translucent plane locating its $z=0$ query section. An independent 16-sample pilot fixes the sample count for a nominal 60-second total budget; reported times include the pilot and analytic-envelope construction. The raw field panels share query locations and color scales, without smoothing or denoising.

\begin{figure}[tbp]
\centering
\includegraphics[width=\columnwidth]{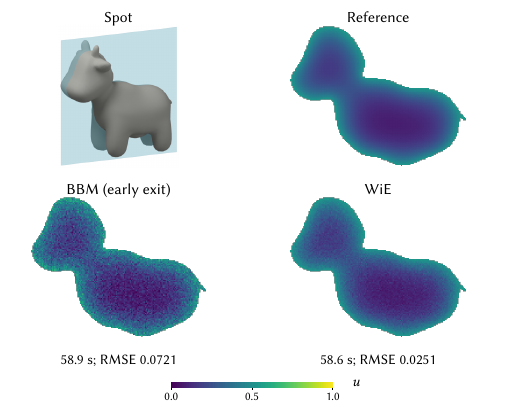}
\caption{Spot at $\lambda=2.1\widehat\lambda_1$. The full mesh and translucent blue plane locate the $z=0$ section; the flat panels show the independent $128^3$ reference and raw BBM/WiE estimates on identical queries. Times include the pilot and envelope construction; RMSE covers the entire section. The flat field panels use one unlit color scale.}
\Description{Complete Spot mesh with a translucent evaluation plane and three corresponding unlit solution fields.}
\label{fig:wiefig-spot}
\end{figure}

Figure~\ref{fig:wiefig-rocker} applies the same field-visualization protocol to Rocker-arm.

\begin{figure}[tbp]
\centering
\includegraphics[width=\columnwidth]{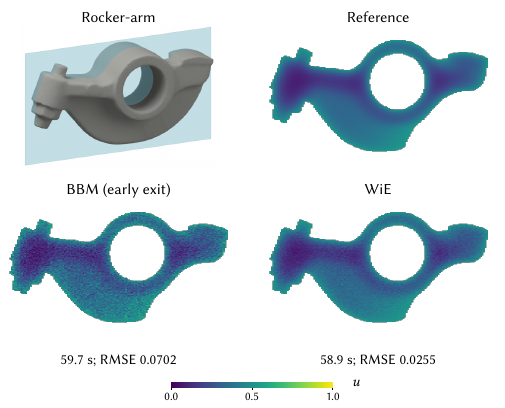}
\caption{Rocker-arm at $\lambda=2.1\widehat\lambda_1$. The full mesh and translucent blue plane locate the $z=0$ section; the flat panels show the independent $256^3$ reference and raw BBM/WiE estimates on identical queries. Times include the pilot and envelope construction; RMSE covers the entire section. The flat field panels use one unlit color scale.}
\Description{Complete Rocker-arm mesh with a translucent evaluation plane and three corresponding unlit solution fields.}
\label{fig:wiefig-rocker}
\end{figure}

\section{Discussion and Limitations}
WiE demonstrates that certified solution bounds can reshape branching dynamics rather than merely prune existing trees. While the transformation preserves unbiasedness, its efficiency depends on envelope quality and sampling acceptance rates. Constructing certified envelopes incurs preprocessing costs, making WiE less advantageous for low branching intensities or limited query workloads. Our spectral-workload fallback mitigates this overhead but relies on empirical calibration. Moreover, the near-critical complexity guarantee bounds expected reproduction events rather than total runtime. Extensions to manifold and parabolic PDEs demonstrate broader applicability, although projected walks inherit geometric approximation errors and space-time sampling introduces additional costs. Future work may include higher-dimensional certification and more efficient envelope optimization directly targeting computational work.

\bibliographystyle{ACM-Reference-Format}
\bibliography{references}
\appendix

\end{document}